\newtheorem{remark}{Remark}
\newcommand{\Rmnum}[1]{\expandafter\@slowromancap\romannumeral #1@}
\newtheorem{theorem}{Theorem}
\newtheorem{proposition}{Proposition}
\newtheorem{assumption}{Assumption}
\newtheorem{lemma}{Lemma}
\def\BState{\State\hskip-\ALG@thistlm}
\begin{document}
%
\title{A Framework for the Evaluation of Network Reliability Under Periodic Demand}
\author[*]{Ali Maatouk}
\author[*]{Fadhel Ayed}
\author[$\dagger$]{Shi Biao}
\author[*]{Wenjie Li}
\author[*]{Harvey Bao}
\author[$\mathparagraph,\ddagger$]{Enrico Zio}
\affil[*]{Paris Research Center, Huawei Technologies, Boulogne-Billancourt, France}
\affil[$\dagger$]{École polytechnique, Palaiseau, France}
\affil[$\mathparagraph$]{Centre de recherche sur les
Risques et les crises, MINES
Paris-PSL,
Paris, France}
\affil[$\ddagger$]{Energy Department, Politecnico di Milano, Milan, Italy}
\maketitle
\thispagestyle{fancy}
\pagestyle{fancy}
\fancyhf{}
\fancyheadoffset{0cm}
\renewcommand{\headrulewidth}{0pt} 
\renewcommand{\footrulewidth}{0pt}
\renewcommand*{\thepage}{\scriptsize{\arabic{page}}}
\fancyhead[R]{\thepage}
\fancypagestyle{plain}{%
   \fancyhf{}%
   \fancyhead[R]{\thepage}%
}

\begin{abstract} 
In this paper, we study network reliability in relation to a periodic time-dependent utility function that reflects the system's functional performance. When an anomaly occurs, the system incurs a loss of utility that depends on the anomaly's timing and duration. We analyze the long-term average utility loss by considering exponential anomalies' inter-arrival times and general distributions of maintenance duration. We show that the expected utility loss converges in probability to a simple form. We then extend our convergence results to more general distributions of anomalies' inter-arrival times and to particular families of non-periodic utility functions. To validate our results, we use data gathered from a cellular network consisting of 660 base stations and serving over 20k users. We demonstrate the quasi-periodic nature of users' traffic and the exponential distribution of the anomalies' inter-arrival times, allowing us to apply our results and provide reliability scores for the network. We also discuss the convergence speed of the long-term average utility loss, the interplay between the different network's parameters, and the impact of non-stationarity on our convergence results
\let\thefootnote\relax\footnotetext{This work has been presented in part at the 32nd European Safety and Reliability Conference (ESREL 2022) \cite{maatoukayyed}.}
\end{abstract}
\begin{IEEEkeywords}
Cellular networks, network reliability, expected utility not satisfied, periodic utility function.
\end{IEEEkeywords}
\IEEEpeerreviewmaketitle
\section{Introduction}
The evaluation of the reliability of any large-scale service system aims to assess its ability to provide services, taking into account the various hazardous events and anomalous conditions that can occur and impact the functioning of its components. These events, such as component failures, can affect the overall service reliability and availability of the system. In the literature on critical infrastructures, such as electric power grids, several measures are commonly used to evaluate reliability, including the loss of load probability (LOLP), expected frequency of load curtailment (EFLC), expected duration of load curtailment (EDLC), expected duration of curtailment (EDC), and expected demand not satisfied (EDNS) \cite{Al-Shaalan20,referencepower,bellani:hal-03481294,Medjoudj17}.

In recent years, the extent to which we rely on networked data systems as part of our critical national infrastructure has become increasingly apparent \cite{5762681}. As a result, the characterization of reliability metrics in networked systems, such as cellular networks, has become crucial. Traditionally, the literature on networked data systems has focused on maximizing link-level reliability to ensure the best possible service for connected devices (see references \cite{5759108}, \cite{6786441}, and \cite{9145854}). Most research on network-wide reliability has consisted of evaluating graph-based connectivity metrics in light of possible failures that may occur in the network (e.g., k-terminal network reliability, as described in reference \cite{9238406}). However, there has recently been a shift towards strict network-level reliability that takes into account user data traffic demand, particularly for next-generation cellular networks. For example, 5G networks are expected to provide 99.999\% (or "five nines") of data availability annually \cite{9700502}, with plans to improve to a seven-nines standard in 6G \cite{9349624}. Given the nature of these requirements, the characterization of the expected demand not satisfied metric in such systems becomes especially important \cite{Li2021}. However, evaluating this measure is challenging due to the complexity and randomness of anomalies, failures, outages, and maintenance in these systems.
\color{black} 

In this paper, we focus on quantitatively characterizing the expected utility not satisfied of a networked system. Specifically, we characterize reliability with respect to a time-dependent utility function $U(t)$ related to the system's functional performance. The utility function $U(t)$ can represent, for example, the data traffic typically transmitted at time $t$, the number of users served, or other similar quantities. Evaluating the system's reliability, then, involves deriving the long-term expected utility loss $\overline{\mathcal{L}}$ of the system, taking into account the failures that may occur at its components and external factors that influence the impact of such failures. Given the dependence of $U(t)$ on time, the evaluation of $\overline{\mathcal{L}}$ requires the formulation of the stochastic differential equations governing its evolution. Then, tools such as stochastic hybrid systems and Dynkin's formula are leveraged to analyze $\overline{\mathcal{L}}$ \cite{hespana,8469047,9007478}. However, obtaining a closed-form expression of $\overline{\mathcal{L}}$ is heavily contingent on the complexity of the differential equations involved. Generally, only approximations can be obtained by such analytical frameworks \cite{RePEc:eee:reensy:v:172:y:2018:i:c:p:159-170,2021arXiv210903919M}. Another approach to characterize $\overline{\mathcal{L}}$ consists of running Monte Carlo simulations of the system \cite{bouissou:hal-01182410,zio2013}. However, Monte Carlo simulations can be computationally costly, especially when a large number of components interact with one another to provide the system's functionality. Additionally, the absence of closed-form expressions reduces the interpretability of $\overline{\mathcal{L}}$ and hinders the optimization process of the system's parameters in the design stage. Hence, the goal of our paper is to address these challenges and provide a theoretical framework to obtain an expression of $\overline{\mathcal{L}}$ under a periodicity assumption on $U(t)$. To that end, the following are the key contributions of this paper:
\begin{itemize}
\item We begin our stochastic analysis by formulating the expected utility not satisfied $\overline{\mathcal{L}}$ as a function of various elements, such as the inter-arrival times of anomalies and their repair times. Through Fourier analysis of the anomalies' inter-arrival time distribution, we show that a key stochastic process converges to a uniform distribution. We then leverage this convergence to provide limiting distributions for several quantities that impact the system's performance.
\item Next, we combine our earlier results and the periodicity of the utility function to show that the expected utility not satisfied converges in probability to a simple and intuitive form. We also establish connections between this expression and standard availability metrics.
\item Afterward, we extend our theoretical results to the case where the periodicity assumption is slightly violated. Specifically, we show that our convergence results still hold when the periodic utility is corrupted by a random stochastic process that satisfies predefined conditions on its first and second-order statistics. Additionally, through Fourier analysis, we also demonstrate that our assumption of exponential distribution for the anomalies' inter-arrival times can be relaxed, and our theoretical findings hold for any bounded probability density function of the anomalies' inter-arrival times.
\item Finally, we consider a large-scale cellular network with 660 cells serving over 20,000 users. Using data we obtained from the network operator, we demonstrate the quasi-periodic nature of user traffic and the exponential distribution of anomalies' inter-arrival times. We, then, use our theoretical results to characterize the expected data traffic not satisfied for this specific network. Additionally, we show that the expected utility not satisfied converges quickly to its theoretical limit, and we investigate the impact of non-stationarity on the convergence results.
\end{itemize}
The rest of the paper is organized as follows: Section \ref{systmodel}
introduces the system model adopted in the paper. In Section \ref{mathanalysis}, we present the mathematical analysis of our system, and we provide our main theoretical findings. Afterward, we extend our theoretical analysis to more general settings in Section \ref{generalizationss}. Then, a use case scenario consisting of a large-scale cellular network is considered in Section \ref{usecasescellular}, and our theoretical findings are then corroborated. Lastly, Section \ref{conclusionsss} concludes the paper. 
\section{System Model}
\label{systmodel}
We consider a system operating in its useful-life phase, during which independent anomalies occur randomly with a Poisson rate $\lambda$ \cite[Chapter~3]{doi:https://doi.org/10.1002/9781118841716.ch3},\cite{OHRING1995747}. Thus, the inter-arrival time $X_j$ between anomalies $j-1$ and $j$ is exponentially distributed with rate $\lambda$
\begin{equation}
\Pr(X_j<t)=1-e^{-\lambda t}.
\end{equation}
When an anomaly takes place, the system operator triggers a maintenance procedure. We let $Y_j$ denote the maintenance time associated with anomaly $j$. We make the assumption that the variables $Y_j$ are independent of $X_j$ and are independently and identically distributed (i.i.d.) with a cumulative distribution function described by
\begin{equation}
\Pr(Y_j<t)=\begin{cases}
 F_{Y}(t), & \text{for } t\geq0, \\
    0, & \text{otherwise}.
  \end{cases}
\end{equation} 
This means that for each $j=1, \ldots$, the variable $Y_j$ follows the same distribution. We also make the following assumption on the distribution of $Y_j$.
\begin{assumption}
The first and second order moments $\mathbb{E}[Y_j]$ and $\mathbb{E}[Y^2_j]$ are finite for $j\in\mathbb{N}^*$.
\label{assumptionony}
\end{assumption}
Furthermore, we consider that when an anomaly takes place at time $t_0$ and the problem is resolved at time $t_1$, a utility loss $\int_{t_0}^{t_1}U(t)dt$ is incurred. Letting $W(t)$ be a binary random variable that is equal to $1$ when the system is suffering from an anomaly and $0$ otherwise, we can define the expected utility not satisfied as follows
\begin{equation}
\overline{\mathcal{L}}=\lim_{T\to+\infty} \:\frac{1}{T}\int_{0}^{T}U(t)W(t)dt.
\label{expectedlossfirstform}
\end{equation}
In practice, the consideration of large $T$ amounts to the system being operated long enough before the expected utility loss assessment. 
Note that the function $U(t)$ can represent a large variety of system quantities depending on the system's operator priorities. For example:
\begin{itemize}
\item $U(t)$ can denote the customer demand (e.g., electricity, communication traffic) served by the system at time $t$; in this scenario, the expected loss score coincides with the notion of EDNS \cite{Medjoudj17}. 
\item $U(t)$ can represent the number of users served by the system at time $t$; thus, the expected utility loss in this case represents the expected number of users affected by the anomalies.
\item $U(t)=\beta(t)U'(t)$, where $U'(t)$ denotes the utility of the system at time $t$ (e.g., customer demand) and $\beta(t)$ is a factor between $0$ and $1$ that indicates the fraction of utility lost depending on the occurrence time of the anomaly.  
\end{itemize}
One can also define $U(t)$ as a combination of system quantities. This shows the generality of our framework. In the next section, we will illustrate the mathematical framework to characterize $\overline{\mathcal{L}}$. But, first, let us consider the following assumption for $U(t)$.
\begin{assumption}
The utility function is a non-negative bounded periodic function. Precisely, there exists a constant $K>0$ and a period $p>0$ such that
\begin{align}
0\leq U(t)&\leq K,\nonumber\\
U(t+p)&=U(t),
\end{align}
for $t\geq0$. 
\label{assumptiononut}
\end{assumption}
The interest in the above assumption is that, as we will show in later sections,  when $U(t)$ verifies the above assumption, $\overline{\mathcal{L}}$ ends up converging in probability to a relatively simple form, which relaxes the difficulties of the quantitative characterization of the reliability of complex networked systems. Note that the periodicity of system utility has been observed in various practical applications due to the nature of human behavior with respect to service demand. For example, in cellular networks, it was shown that user traffic typically exhibits a periodical pattern \cite{7762185}. Such patterns have been also found in data that we gathered from a large-scale cellular network, as will be illustrated in Section \ref{subsectionutility}. It is worth noting that such trends are not exclusive to cellular networks. For instance, the periodic behavior has also been witnessed in the electricity demands in power grid networks \cite{YUKSELTAN2017287}. Accordingly, our theoretical derivations are not constrained to cellular networks settings, but rather can be leveraged for various other application scenarios.



\section{Mathematical Analysis}
\label{mathanalysis}
\begin{figure*}[ht]
\centering
\begin{subfigure}{0.33\textwidth}
  \centering
  \includegraphics[width=.99\linewidth]{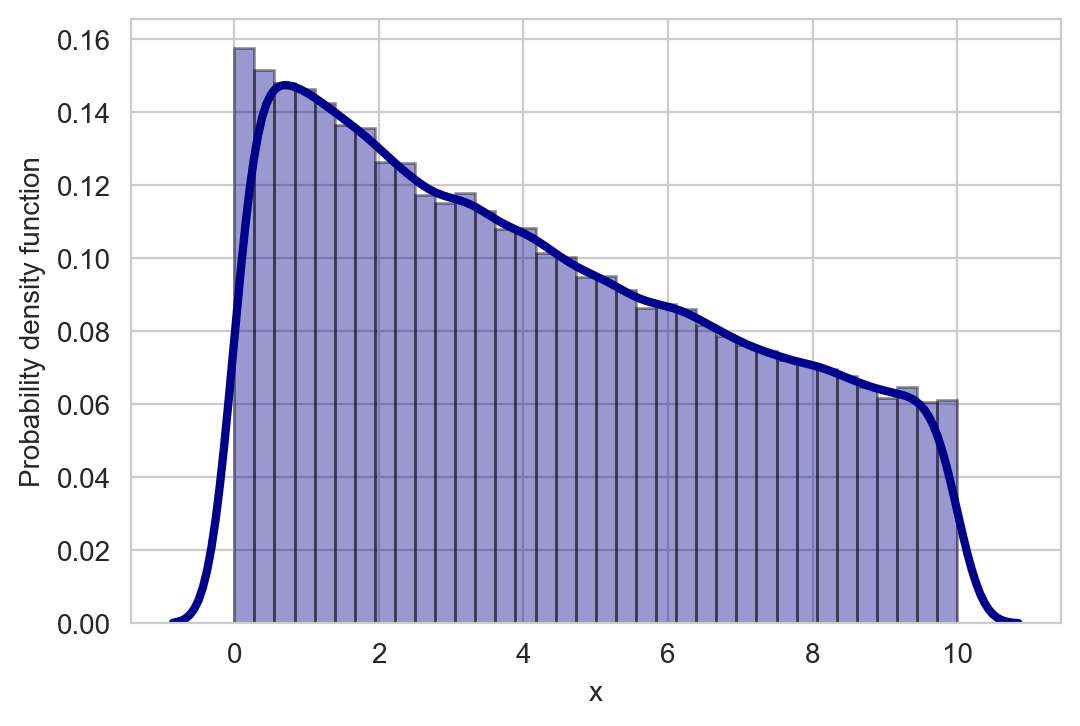}
  \caption{Zero Gaussian added.}
    \label{agemetric}
\end{subfigure}%
\begin{subfigure}{0.33\textwidth}
\centering
  \includegraphics[width=.99\linewidth]{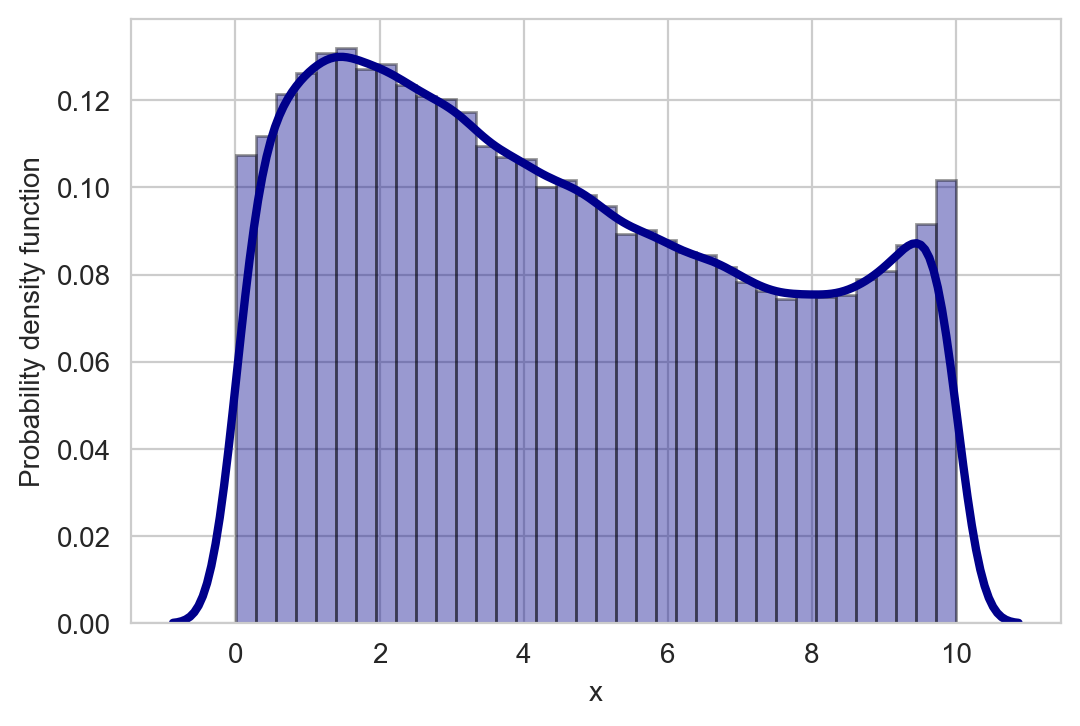}
  \caption{One Gaussian added.}
\label{errmetric}
\end{subfigure}%
\begin{subfigure}{0.33\textwidth}
\centering
  \includegraphics[width=.99\linewidth]{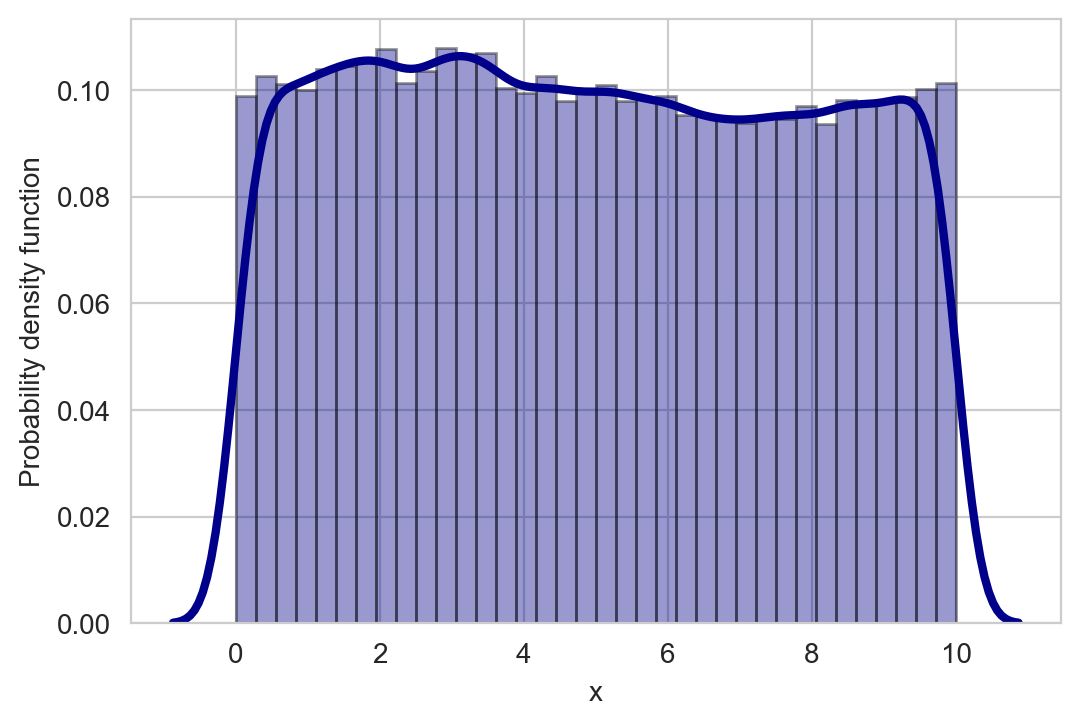}
  \caption{Ten Gaussians added.}
\label{effmetric}
\end{subfigure}%
\caption{Illustrations of the smoothing phenomenon.}
\vspace{-13pt}
\label{metrics}
\end{figure*}
We decompose the time horizon $T$ in eq. (\ref{expectedlossfirstform}) into multiple stages. Specifically, we let $D_n=\sum_{j=1}^{n}(X_j+Y_j)$ 
and we rewrite the expected utility loss of the system as
\begin{equation}
\overline{\mathcal{L}}=\lim_{n\to+\infty} \:\frac{\int_{0}^{D_n}U(t)W(t)dt}{D_n}.
\end{equation}
Next, by multiplying by $\frac{1}{n}$ both the numerator and denominator, we end up with
\begin{equation}
\overline{\mathcal{L}}=\lim_{n\to+\infty} \:\frac{\frac{1}{n}\sum_{j=1}^{n}\int_{D_{j-1}}^{D_j}U(t)W(t)dt}{\frac{1}{n}\sum_{j=1}^{n}(X_j+Y_j)}.
\end{equation}
Noting that $W(t)$ is equal to $0$ by definition in every interval $[D_j,D_j+X_j]$, we can rewrite the expected loss as
\begin{equation}
\overline{\mathcal{L}}=\lim_{n\to+\infty} \:\frac{\frac{1}{n}\sum_{j=1}^{n}\int_{D_j-Y_j}^{D_j}U(t)dt}{\frac{1}{n}\sum_{j=1}^{n}(X_j+Y_j)}.
\label{expectedlossbeforefinal}
\end{equation}
Clearly, the challenging part of the evaluation of the expected loss is the numerator. To deal with this, we leverage the periodicity of the function $U(t)$. 
\begin{lemma}
If $U(t)$ is a periodic function of period $p$, then the expected loss $\overline{\mathcal{L}}$ can be rewritten as
\begin{equation}
\overline{\mathcal{L}}=\lim_{n\to+\infty} \:\frac{\frac{1}{n}\sum_{j=1}^{n}\int_{D_j ^{[p]}-Y_j}^{D^{[p]}_j}U(t)dt}{\frac{1}{n}\sum_{j=1}^{n}(X_j+Y_j)},
\label{expressiontaba3lasess}
\end{equation}
where $D_j^{[p]}=D_j \text{ mod }p$ is the remainder of the Euclidean division of $D_j$ by $p$ (i.e., the least positive residue).
\end{lemma}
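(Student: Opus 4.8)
The plan is to reduce the lemma to a single-interval identity and prove that identity by a periodic change of variable. Concretely, I would first show the stronger, term-by-term statement that for every index $j$,
\begin{equation}
\int_{D_j-Y_j}^{D_j}U(t)\,dt=\int_{D_j^{[p]}-Y_j}^{D_j^{[p]}}U(t)\,dt.
\end{equation}
Once this is available, the two numerators in eq. (\ref{expectedlossbeforefinal}) and eq. (\ref{expressiontaba3lasess}) agree summand by summand for every $n$, while the denominators $\frac{1}{n}\sum_{j=1}^{n}(X_j+Y_j)$ are literally identical. Hence the two limits coincide (in particular, one exists precisely when the other does), and the lemma follows immediately.

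To establish the single-interval identity I would invoke the Euclidean division that defines $D_j^{[p]}$: writing $q_j=\lfloor D_j/p\rfloor\in\mathbb{N}$, we have $D_j=q_j\,p+D_j^{[p]}$ with $D_j^{[p]}\in[0,p)$. Performing the substitution $u=t-q_j\,p$ on the left-hand integral shifts the upper limit to $D_j-q_j\,p=D_j^{[p]}$, shifts the lower limit to $D_j-Y_j-q_j\,p=D_j^{[p]}-Y_j$, and turns the integrand into $U(u+q_j\,p)$. Since $q_j$ is a nonnegative integer and $U$ is $p$-periodic, $U(u+q_j\,p)=U(u)$, which reproduces exactly the right-hand integral. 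This is the entire computation; it is short, and the only content is the bookkeeping of the shift.

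The one point that requires care — and which I expect to be the main obstacle — is the sign of the shifted lower limit. The original interval $[D_j-Y_j,D_j]$ lies in $[0,+\infty)$ because $D_j-Y_j=D_{j-1}+X_j\geq 0$, so the periodicity stated in Assumption \ref{assumptiononut} for $t\geq 0$ applies there without difficulty. After the shift, however, the lower endpoint $D_j^{[p]}-Y_j$ becomes negative whenever $Y_j>D_j^{[p]}$, and then the relation $U(u+q_j\,p)=U(u)$ is being used at negative arguments $u$, for which Assumption \ref{assumptiononut} does not literally speak. I would resolve this by extending $U$ periodically to all of $\mathbb{R}$ via $U(t):=U(t\bmod p)$ — the natural reading of a periodic utility — so that $U(t+p)=U(t)$ holds for every real $t$ and the change of variable is justified regardless of the sign of $D_j^{[p]}-Y_j$. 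Alternatively, if one insists on keeping $U$ defined only on $[0,+\infty)$, the same conclusion follows by splitting $[D_j-Y_j,D_j]$ at the largest multiple of $p$ it contains and shifting each resulting piece by its own multiple of $p$, which never produces a negative argument at the cost of slightly heavier notation.
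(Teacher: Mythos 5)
Your proposal is correct and takes essentially the same route as the paper: the paper's proof also writes $D_j = kp + D_j^{[p]}$ via Euclidean division, substitutes $t' = t - kp$, and invokes $U(t'+kp)=U(t')$. The only difference is that you explicitly flag and repair the corner case where the shifted lower limit $D_j^{[p]}-Y_j$ is negative (by periodically extending $U$ to $\mathbb{R}$, or by splitting the interval at multiples of $p$), a point the paper's proof silently glosses over — so your version is, if anything, slightly more careful.
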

\begin{proof}
To prove this result, we first note that
\begin{equation}
D_j=kp+(D_j \text{ mod }p),
\end{equation}
where $k\in\mathbb{N}$. Next, we apply a change of variable to the integral in eq. (\ref{expectedlossbeforefinal}), letting $t'=t-kp$. By doing so, eq. (\ref{expectedlossbeforefinal}) can be rewritten as follows
\begin{align}
\overline{\mathcal{L}}&=\lim_{n\to+\infty} \:\frac{\frac{1}{n}\sum_{j=1}^{n}\int_{D_j^{[p]}-Y_j}^{D_j^{[p]}}U(t'+kp)dt'}{\frac{1}{n}\sum_{j=1}^{n}(X_j+Y_j)}\nonumber\\&\overset{(a)}{=}\lim_{n\to+\infty} \:\frac{\frac{1}{n}\sum_{j=1}^{n}\int_{D_j^{[p]}-Y_j}^{D_j^{[p]}}U(t')dt'}{\frac{1}{n}\sum_{j=1}^{n}(X_j+Y_j)},
\end{align}
where $(a)$ results from the periodicity of $U(t)$. Then, by interchanging $t'$ and $t$, we can confirm the lemma.
\end{proof}
The next step in the analysis consists of finding the distribution of $D_j^{[p]}$. To do so, we first rewrite  $D_j^{[p]}$ as follows
\begin{equation}
D_j^{[p]}=(X^{[p]}_j+Y_j^{[p]})\: \text{mod}\: p,
\label{rewritedj}
\end{equation}
where 
\begin{align}
&X^{[p]}_j=(\sum_{k=1}^{j}X_k)\: \text{mod}\: p,\nonumber\\&Y^{[p]}_j=(\sum_{k=1}^{j}Y_k)\: \text{mod}\: p.
\label{modulxpstuff}
\end{align}
Next, we investigate the distribution of $X^{[p]}_j$ more closely. 
\begin{theorem}
Let $f_{X^{[p]}_j}(x)$ denote the probability distribution function of $X^{[p]}_j$. We have that $f_{X^{[p]}_j}(x)$ converges uniformly to the uniform distribution on $[0,p]$. Precisely, 
\begin{equation}
\sup_{x\in[0,p]} |f_{X^{[p]}_j}(x)-\frac{1}{p}|\leq \frac{C\alpha^{j}}{p},
\end{equation}
where
\begin{equation}
\alpha=\frac{\lambda p}{\sqrt{\lambda^2p^2+4\pi^2}},
\end{equation}
\begin{equation}
C=\sum_{n=1}^{\infty}\frac{\lambda^2p^2+4\pi^2}{\lambda^2p^2+4\pi^2n^2}.
\end{equation}
\label{theoremuniform}
\end{theorem}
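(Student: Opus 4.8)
The plan is to compute the density of $X^{[p]}_j$ through its Fourier series on the circle $[0,p]$ and to identify the constant term with the uniform density $1/p$. First I would observe that $S_j := \sum_{k=1}^j X_k$ is a sum of $j$ i.i.d.\ exponentials, hence Erlang-distributed, and that $X^{[p]}_j = S_j \bmod p$ is the \emph{wrapped} version of $S_j$ onto $[0,p)$, with density
$$f_{X^{[p]}_j}(x) = \sum_{m=0}^{\infty} f_{S_j}(x+mp), \qquad x\in[0,p).$$
The natural object to study is then the Fourier coefficients $c_n = \frac{1}{p}\int_0^p f_{X^{[p]}_j}(x)\, e^{-2\pi i n x/p}\,dx$, since the claim is precisely that all coefficients other than $c_0$ are small.

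The second step exploits the periodicity of the kernel $e^{-2\pi i n x/p}$ to unfold the wrapping and rewrite each coefficient as a value of the characteristic function of $S_j$:
$$c_n = \frac{1}{p}\int_0^\infty f_{S_j}(s)\, e^{-2\pi i n s/p}\,ds = \frac{1}{p}\,\mathbb{E}\big[e^{-2\pi i n S_j/p}\big] = \frac{1}{p}\left(\frac{\lambda p}{\lambda p + 2\pi i n}\right)^{j},$$
where the last equality uses that the characteristic function of a single exponential is $\lambda/(\lambda - it)$ and that $S_j$ is a sum of $j$ independent copies. In particular $c_0 = 1/p$, which is exactly the Fourier coefficient of the uniform density, so the target limit is singled out automatically and the problem reduces to controlling the tail $\sum_{n\neq 0} c_n e^{2\pi i n x/p}$.

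The third step subtracts the constant term and bounds the remainder. Using $\big|f_{X^{[p]}_j}(x) - \tfrac1p\big| \le \sum_{n\neq 0} |c_n|$ together with the symmetry $|c_{-n}|=|c_n|$, I would write
$$\Big|f_{X^{[p]}_j}(x) - \tfrac1p\Big| \le \frac{2}{p}\sum_{n=1}^{\infty}\left(\frac{\lambda p}{\sqrt{\lambda^2 p^2 + 4\pi^2 n^2}}\right)^{j}.$$
The key algebraic manipulation is to factor out the $n=1$ term via $\left(\frac{\lambda p}{\sqrt{\lambda^2p^2+4\pi^2n^2}}\right)^j = \alpha^j\left(\frac{\lambda^2p^2+4\pi^2}{\lambda^2p^2+4\pi^2n^2}\right)^{j/2}$, with $\alpha$ as in the statement. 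Since the bracketed ratio is at most $1$ for every $n\ge 1$, for $j\ge 2$ one may replace the exponent $j/2$ by $1$ and recognize the resulting series as $C$, giving a bound of the claimed form $C\alpha^j/p$ (up to the symmetry factor, which I would either absorb into the constant or track explicitly).

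The main obstacle I anticipate is analytic rather than algebraic: I must verify that the Fourier series of $f_{X^{[p]}_j}$ converges absolutely and uniformly, so that the pointwise identity $f=\sum_n c_n e^{2\pi i n x/p}$ holds and the supremum estimate above is legitimate. This is exactly where the Erlang structure pays off, since $|c_n|\sim n^{-j}$, so absolute summability (hence uniform convergence by Weierstrass) is guaranteed once $j\ge 2$; the case $j=1$ is not summable and would require separate treatment or lies outside the regime of interest. A secondary point of care is the interchange of summation and integration used to collapse the wrapped density into the characteristic function, which is justified by the nonnegativity of $f_{S_j}$ through Tonelli's theorem.
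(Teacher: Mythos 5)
Your argument is correct in substance and rests on the same Fourier-analytic idea as the paper, but the execution is genuinely different in one respect: the paper never carries out the tail estimate itself. Its proof rescales to $R_j=\sum_{k=1}^j X_k/p$ reduced modulo $1$, checks that the density of $R_1$ (exponential of rate $\lambda p$) is bounded so that $j^*=1$, computes $\hat{g}(n)=\frac{\lambda p}{\lambda p+2\pi i n}$ and $\alpha=|\hat{g}(1)|$, and then invokes \cite[Theorem~1]{schatte1984asymptotic} as a black box to obtain $\sup_r|f_{R_j^{[1]}}(r)-1|\leq C\alpha^j$ with $C=\alpha^{-2}\sum_{n\geq1}|\hat{g}(n)|^{2}$, before rescaling back. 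You instead re-derive that estimate from scratch for the wrapped Erlang: unfolding the wrapping to get $c_n=\frac{1}{p}\bigl(\frac{\lambda p}{\lambda p+2\pi i n}\bigr)^{j}$ and then factoring $|c_n|\,p=\alpha^{j}\bigl(\frac{\lambda^2p^2+4\pi^2}{\lambda^2p^2+4\pi^2n^2}\bigr)^{j/2}\leq\alpha^{j}\,\frac{\lambda^2p^2+4\pi^2}{\lambda^2p^2+4\pi^2n^2}$ for $j\geq2$ is precisely the mechanism behind Schatte's constant $C=\alpha^{-2}\sum_n|\hat{g}(n)|^2$, so you have effectively reproved the cited theorem in this special case --- a gain in self-containedness at the cost of the generality the citation provides (the paper reuses the same citation verbatim for general bounded inter-arrival densities in Theorem~\ref{theoremfouriergenerall}). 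Two quantitative caveats, neither fatal: your two-sided sum over $n\neq0$ yields $2C\alpha^j/p$ rather than the stated $C\alpha^j/p$, and absolute summability of the coefficients forces $j\geq2$, as you note. The second restriction in fact costs nothing: for $j=1$ the wrapped density is $f(x)=\lambda e^{-\lambda x}/(1-e^{-\lambda p})$ on $[0,p)$, and for small $\lambda p$ its sup deviation from $1/p$ is $\approx\lambda/2$ while $C\alpha/p\approx\pi\lambda/12$, so the one-sided constant cannot cover $j=1$ anyway; only the geometric rate $\alpha$ matters for the downstream results (Propositions~\ref{theoremadduniform}--\ref{covarianceproposition} and Theorem~\ref{theoremfinal}), and your proof delivers exactly that rate.
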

\begin{proof}
The proof revolves around a Fourier analysis of the distribution of $X^{[p]}_j$. In essence, we first analyze the behavior of the Fourier coefficients of the random variables making up $X^{[p]}_j$. Then, by leveraging the scalability property of the exponential distribution, we can derive the desired results. For conciseness, the details of the proof are reported in Appendix \ref{prooffourrier}.
\end{proof}
In this theorem, we showed that $X^{[p]}_j$ converges uniformly to a uniform distribution as $j$ becomes large. Additionally, the bounds we derived imply that $\nabla\alpha(\lambda,p)\geq0$ for $\lambda,p\geq0$. This means that the speed of convergence to a uniform distribution decreases as either $p$ or $\lambda$ increases. Figure \ref{alphainfunctionplambda} illustrates this trend, where $\alpha$ varies and its contour levels are shown. To understand this trend, we can examine how the cumulative density function of the anomalies' inter-arrival times changes with increasing $\lambda$. For any interval $\mathcal{I}=[0,a]$ where $a$ is any positive real number, $\Pr(X_k\in\mathcal{I})=1-\exp(-\lambda a)$ increases as $\lambda$ grows for $k\in\mathbb{N}^*$. Based on this and the expression for $X^{[p]}_j$ given by eq. (\ref{modulxpstuff}), we can conclude that more realizations of $X_k$ are needed to span all the possible values from $[0,p]$. Furthermore, as $p$ increases, the range of values where $X^{[p]}_j$ can fall also increases. Hence, using the expression for $X^{[p]}_j$, we can see that more realizations of $X_k$ are also needed in this case to span all the possible values of $X^{[p]}_j$ from $[0,p]$. In both cases, larger values of $j$ are needed in order to converge.
\begin{figure}[!ht]
\centering
\includegraphics[width=.99\linewidth]{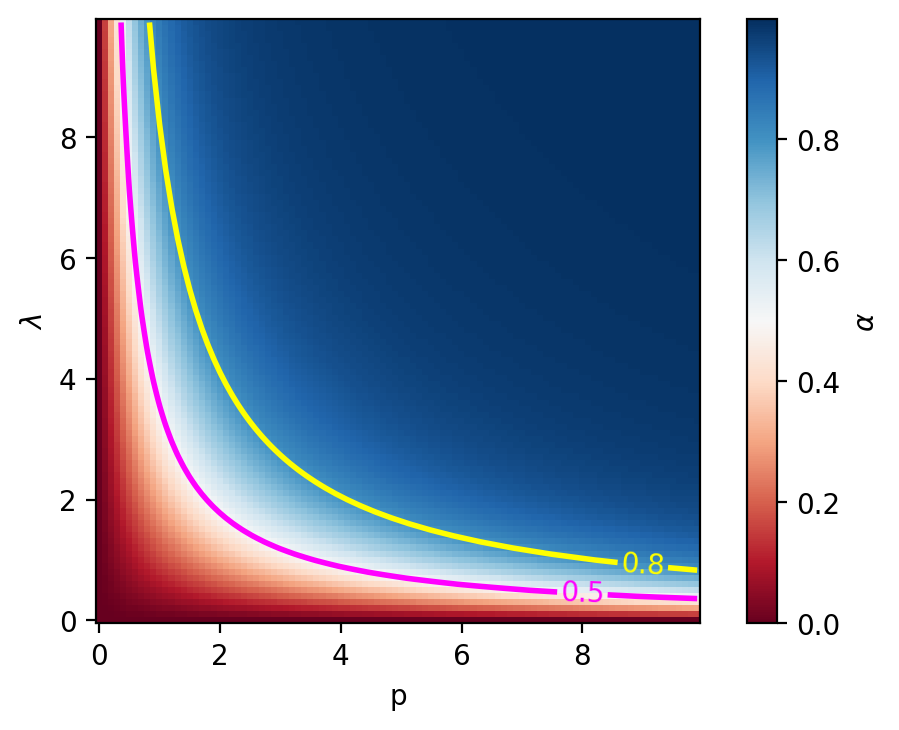}
\vspace{-20pt}
\caption{Illustration of $\alpha$ in function of $\lambda$ and $p$.}
\label{alphainfunctionplambda}
\end{figure}\\
Next, we recall that our goal is to characterize the distribution of $D^{[p]}_j$. To do so, we need to take into account the distribution of $Y^{[p]}_j$, which can be quite general as we impose no restriction on $F_{Y}(t)$. To alleviate this difficulty, we provide below an essential lemma to our analysis. 
\begin{lemma}
Let $A$ be a random variable defined on $[0,p]$ satisfying
\begin{equation}
\sup_{x\in[0,p]}|f_{A}(x)-\frac{1}{p}|\leq C,
\end{equation}
where $f_{A}(x)$ is the probability density function of $A$. Let $B$ be a random variable of arbitrary distribution defined on $[0,p]$ independent of $A$. Then,
\begin{equation}
\sup_{x\in[0,p]} |f_{Z}(z)-\frac{1}{p}|\leq \sup_{x\in[0,p]}|f_{A}(x)-\frac{1}{p}|\leq C,
\end{equation}
where $Z=A+B\text{ mod }p$.
\label{addingsmoothesout}
\end{lemma}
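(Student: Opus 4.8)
The plan is to identify $Z=(A+B)\bmod p$ as the circular convolution of $A$ and $B$ on the circle of circumference $p$, and then to exploit the elementary fact that averaging a density against an independent probability distribution cannot increase its sup-distance to the uniform density. In other words, the arbitrary ``noise'' contributed by $B$ can only smooth $A$ toward uniformity, never push it away in the supremum norm.

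First I would derive the density of $Z$. Conditioning on $B=b$ and observing that reduction modulo $p$ acts as a measure-preserving rotation of $[0,p]$ onto itself, the conditional density is $f_{Z\mid B=b}(z)=f_A\big((z-b)\bmod p\big)$. Averaging over $B$ then yields the circular convolution
\[
f_Z(z)=\int_0^p f_A\big((z-b)\bmod p\big)\,f_B(b)\,db .
\]
Next, since $f_B$ is a density on $[0,p]$, we have $\tfrac{1}{p}=\int_0^p \tfrac{1}{p}\,f_B(b)\,db$, so subtracting the uniform density lets us absorb it into the same integral,
\[
f_Z(z)-\tfrac{1}{p}=\int_0^p \Big(f_A\big((z-b)\bmod p\big)-\tfrac{1}{p}\Big)\,f_B(b)\,db .
\]

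Finally I would apply the triangle inequality, pulling the absolute value inside the integral (valid because $f_B\geq 0$), bounding the integrand pointwise by $M:=\sup_{x\in[0,p]}|f_A(x)-\tfrac{1}{p}|$, and using $\int_0^p f_B(b)\,db=1$ to obtain $|f_Z(z)-\tfrac{1}{p}|\leq M$ for every $z\in[0,p]$. Taking the supremum over $z$ gives the first inequality of the lemma, and the hypothesis $M\leq C$ supplies the second.

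The only delicate step is justifying the circular convolution formula, that is, verifying that reducing $A+b$ modulo $p$ leaves the density invariant up to the rotation $z\mapsto(z-b)\bmod p$; this requires a careful but routine change-of-variables argument on the circle, handling the wrap-around of the interval $[b,p+b]$ back onto $[0,p]$. Once that formula is in hand, the remainder is essentially a one-line application of the triangle inequality against a probability measure, so I expect no further obstacle. It is worth noting that the argument never uses any property of $B$ beyond its being a probability density on $[0,p]$, which is exactly why the lemma holds for $B$ of arbitrary distribution and why, combined with Theorem \ref{theoremuniform}, it will let us transfer the uniform convergence of $X^{[p]}_j$ to $D^{[p]}_j$ regardless of $F_Y$.
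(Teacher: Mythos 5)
Your proposal is correct and follows essentially the same route as the paper's proof in Appendix \ref{proofsum}: both derive $f_Z$ as the circular convolution of $f_A$ and $f_B$ (the paper writes the wrap-around explicitly as the two integrals $\int_0^z$ and $\int_z^p$, which is exactly the change-of-variables bookkeeping you flag as the delicate step), absorb $\frac{1}{p}$ into the integral against $f_B$, and conclude by the triangle inequality and $\int_0^p f_B(b)\,db=1$. No gap; your closing observation that nothing beyond $f_B$ being a probability density is used matches the paper's intended generality in $F_Y$.
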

\begin{proof}
Given the independence between $A$ and $B$, the proof revolves around the notion of probability distributions' convolution. Then, by leveraging the particularity of the distribution of $A$ along with the definition of the modulo function, we can derive the desired results. For conciseness, the details of the proof are reported in Appendix \ref{proofsum}.
\end{proof}
\begin{remark}
The above lemma has an important interpretation: when a random variable (RV) $B$ independent of $A$ is added to the random variable $A$, it results in a smoothing effect when the modulo $p$ function is applied. Thus, the distribution of $Z=A+B\text{ mod }p$ becomes even closer to a uniform distribution than $A\text{ mod }p$. To illustrate this, we consider an exponentially distributed RV $A$ of rate $\lambda=10$, and we add to this RV $0$, $1$, and $10$ independent standard Gaussian RVs. The results are reported in Fig. \ref{metrics}, where the distribution of the sum of the RVs modulo $p$ is plotted in each case, and the smoothing effect is shown. 
\end{remark}
With the above lemma in mind, we can now tackle the characterization of the distribution of $D^{[p]}_j$ in the next proposition. 
\begin{proposition}
Let $f_{D^{[p]}_j}(z)$ denote the probability distribution function of $D^{[p]}_j$. We have that $f_{D^{[p]}_j}(z)$ converges uniformly to the uniform distribution on $[0,p]$. Precisely, 
\begin{equation}
\sup_{z\in[0,p]} |f_{D^{[p]}_j}(z)-\frac{1}{p}|\leq \frac{C\alpha^{j}}{p}.
\end{equation}
\label{theoremadduniform}
\end{proposition}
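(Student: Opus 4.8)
The plan is to obtain the result by directly combining the two preceding ingredients: the uniform-convergence bound for $X^{[p]}_j$ from Theorem \ref{theoremuniform} and the smoothing Lemma \ref{addingsmoothesout}. The natural starting point is the decomposition (\ref{rewritedj}), which expresses $D^{[p]}_j=(X^{[p]}_j+Y^{[p]}_j)\bmod p$. I would then identify $A=X^{[p]}_j$ and $B=Y^{[p]}_j$ in the notation of Lemma \ref{addingsmoothesout} and verify that the lemma's hypotheses are met, so that its conclusion transfers the bound from $X^{[p]}_j$ to $D^{[p]}_j$.

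To apply the lemma I would check three points. First, both $X^{[p]}_j$ and $Y^{[p]}_j$ take values in $[0,p]$ by construction, since each is a sum reduced modulo $p$ (see eq. (\ref{modulxpstuff})). Second, I would argue independence: because the maintenance times $Y_k$ are assumed independent of the inter-arrival times $X_k$, the variable $X^{[p]}_j$, a measurable function of $X_1,\dots,X_j$, is independent of $Y^{[p]}_j$, a function of $Y_1,\dots,Y_j$. Thus $B=Y^{[p]}_j$ plays the role of an independent random variable of arbitrary distribution on $[0,p]$, exactly as the lemma requires, and no restriction on $F_Y$ is needed. Third, Theorem \ref{theoremuniform} supplies precisely the closeness-to-uniform hypothesis for $A$, with the generic constant $C$ appearing in the lemma instantiated by $\frac{C\alpha^{j}}{p}$.

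With these verifications in place, applying Lemma \ref{addingsmoothesout} gives
\begin{equation}
\sup_{z\in[0,p]}\left|f_{D^{[p]}_j}(z)-\frac{1}{p}\right|\leq \sup_{x\in[0,p]}\left|f_{X^{[p]}_j}(x)-\frac{1}{p}\right|\leq \frac{C\alpha^{j}}{p},
\end{equation}
which is the claimed bound. I do not expect a genuinely hard step here: once the hypotheses are matched, the argument is essentially bookkeeping. The only points requiring care, which I would flag as the main (mild) obstacle, are the clean justification that $X^{[p]}_j$ and $Y^{[p]}_j$ are independent, and making sure the generic constant in the smoothing lemma is replaced by the $j$-dependent quantity $\frac{C\alpha^{j}}{p}$ rather than a fixed constant. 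Since the smoothing lemma only shrinks, and never enlarges, the supremum deviation from $\frac{1}{p}$, the geometric decay rate $\alpha^{j}$ is inherited verbatim by $D^{[p]}_j$.
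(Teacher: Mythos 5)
Your proposal is correct and matches the paper's own proof: the paper likewise invokes the decomposition in eq. (\ref{rewritedj}), the independence of $X^{[p]}_j$ and $Y^{[p]}_j$, and Lemma \ref{addingsmoothesout} applied with the bound $\frac{C\alpha^{j}}{p}$ from Theorem \ref{theoremuniform}. Your version simply spells out the verification of the lemma's hypotheses, which the paper leaves implicit.
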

\begin{proof}
To prove the proposition, we recall from eq. (\ref{rewritedj}) that $D^{[p]}_j$ can be written as the sum of two RVs defined on $[0,p]$. Then, due to the independence between $X^{[p]}_j$ and $Y_j^{[p]}$, we can leverage Lemma \ref{addingsmoothesout} to conclude the proposition.
\end{proof}
\begin{remark}
The trend seen in Fig. \ref{metrics} is confirmed by Proposition \ref{theoremadduniform}, which shows that the convergence of the distribution $D^{[p]}_j$ to a uniform distribution happens exponentially fast. This suggests that a relatively small number of anomalies/repair stages is sufficient to model $D^{[p]}_j$ as a uniform distribution with minimal modeling penalty. As seen in Fig. \ref{metrics}, even when only 10 independent Gaussian random variables are added to the exponential random variable, the resulting sum modulo $p$ is already distributed almost uniformly.
\end{remark}
Given the above results, we can conclude that as $j$ gets large, the distribution of $D^{[p]}_j$ approaches the uniform distribution on $[0,p]$. Now, to characterize $\overline{\mathcal{L}}$ in eq. (\ref{expressiontaba3lasess}), let us define the integral $I_j$ as 
\begin{equation}
I_j=\int_{D_j ^{[p]}-Y_j}^{D^{[p]}_j}U(t)dt.
\end{equation}
As seen in eq. (\ref{expressiontaba3lasess}), the difficult part in characterizing $\overline{\mathcal{L}}$ comes from the fact that the terms $I_j$ in the numerator are not independent and do not share the same distribution. Hence, providing a statistical convergence of the sum present in the numerator is challenging. To address these challenges, we first provide the following lemma for the convergence of the sum of the expected values of $I_j$. 
\begin{proposition}
Let $S_n=\frac{1}{n}\sum_{j=1}^{n}\mathbb{E}[I_j]$ denote the partial sum of the expected value of $I_j$. We have 
\begin{equation}
S_n\xrightarrow{n \to +\infty}\overline{I},
\end{equation}
where $\overline{I}=\mathbb{E}[Y]\overline{U}$, and $\overline{U}=\frac{1}{p}\int_{0}^{p}U(t)dt$ is the average utility in a period $p$.
\label{convergenceofthesum}
\end{proposition}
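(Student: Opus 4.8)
The plan is to reduce the computation of $\mathbb{E}[I_j]$ to an integral against a density that is uniformly close to $1/p$, after a change of perspective from the completion time $D_j$ to the \emph{onset} time of anomaly $j$. The main obstacle is that $D_j^{[p]}$ and $Y_j$ are \emph{not} independent, since $Y_j$ is one of the summands defining $D_j=\sum_{k=1}^{j}(X_k+Y_k)$; a naive factorization of the expectation is therefore invalid. To circumvent this, I would first rewrite the integration window. Writing $S_j=D_j-Y_j=\sum_{k=1}^{j}X_k+\sum_{k=1}^{j-1}Y_k$ for the onset time of the $j$-th anomaly and extending $U$ periodically to all of $\mathbb{R}$, periodicity gives
\begin{equation}
I_j=\int_{D_j^{[p]}-Y_j}^{D_j^{[p]}}U(t)\,dt=\int_{S_j}^{S_j+Y_j}U(t)\,dt=\int_{S_j^{[p]}}^{S_j^{[p]}+Y_j}U(t)\,dt,
\end{equation}
where $S_j^{[p]}=S_j\bmod p$. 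The decisive point is that $S_j^{[p]}$ depends only on $X_1,\dots,X_j$ and $Y_1,\dots,Y_{j-1}$, hence is independent of $Y_j$.

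Next I would control the law of $S_j^{[p]}$. Since $S_j^{[p]}=(X_j^{[p]}+Y_{j-1}^{[p]})\bmod p$ with $X_j^{[p]}$ and $Y_{j-1}^{[p]}$ independent, Theorem \ref{theoremuniform} together with Lemma \ref{addingsmoothesout} yields the same uniform bound as in Proposition \ref{theoremadduniform}, namely $\sup_{s\in[0,p]}|f_{S_j^{[p]}}(s)-1/p|\le C\alpha^{j}/p$. Writing $e_j(s)=f_{S_j^{[p]}}(s)-1/p$ and using the independence of $S_j^{[p]}$ and $Y_j$, I would express
\begin{equation}
\mathbb{E}[I_j]=\int_{0}^{\infty}\!\Big(\int_{0}^{p}\!\Big(\int_{s}^{s+y}U(t)\,dt\Big)f_{S_j^{[p]}}(s)\,ds\Big)f_Y(y)\,dy.
\end{equation}
Replacing $f_{S_j^{[p]}}$ by the exact uniform density $1/p$, Fubini's theorem and the period-translation invariance of $\int_0^p U$ give the clean identity $\tfrac1p\int_0^p\int_s^{s+y}U(t)\,dt\,ds=\overline{U}\,y$, so the uniform part integrates to $\overline{U}\,\mathbb{E}[Y]=\overline{I}$ exactly.

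It then remains to bound the contribution of $e_j$. Using $0\le U\le K$ (Assumption \ref{assumptiononut}) one has $\big|\int_s^{s+y}U(t)\,dt\big|\le Ky$, so the inner integral against $e_j$ is at most $Kyp\sup_s|e_j(s)|\le Ky\,C\alpha^{j}$; integrating against $f_Y$ and invoking the finiteness of $\mathbb{E}[Y]$ (Assumption \ref{assumptionony}) gives the geometric estimate $|\mathbb{E}[I_j]-\overline{I}|\le CK\,\mathbb{E}[Y]\,\alpha^{j}$. Finally, since $\alpha=\lambda p/\sqrt{\lambda^2p^2+4\pi^2}<1$, the Cesàro average obeys
\begin{equation}
|S_n-\overline{I}|\le\frac{1}{n}\sum_{j=1}^{n}|\mathbb{E}[I_j]-\overline{I}|\le\frac{CK\,\mathbb{E}[Y]}{n}\cdot\frac{\alpha}{1-\alpha}\xrightarrow{n\to\infty}0,
\end{equation}
which proves the claim and shows the convergence is of order $1/n$. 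The only genuinely delicate step is the decoupling in the first paragraph; once $I_j$ is expressed through the onset time $S_j^{[p]}$, everything reduces to the already-established near-uniformity and elementary bounds.
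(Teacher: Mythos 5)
Your proposal is correct and follows essentially the same route as the paper's proof: you decouple $Y_j$ from the window's left endpoint by passing to the onset time $[D_{j-1}+X_j]\bmod p$ (the paper does exactly this in eq.~(\ref{eqwithouD})), invoke Theorem \ref{theoremuniform} with Lemma \ref{addingsmoothesout} to get the uniform bound $C\alpha^{j}/p$ on the density, and compute the uniform part to be $\overline{I}=\mathbb{E}[Y]\overline{U}$ by the same Fubini-plus-periodicity calculation. The only (harmless) deviation is at the end, where you replace the paper's Stolz--Cesàro argument with a direct triangle-inequality summation of the geometric error terms, which has the small added benefit of making the explicit $O(1/n)$ rate $|S_n-\overline{I}|\leq\frac{CK\,\mathbb{E}[Y]\alpha}{n(1-\alpha)}$ visible.
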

\begin{proof}
Our proof consists of leveraging Lemma \ref{addingsmoothesout} and Proposition \ref{theoremadduniform}, the periodicity of $U(t)$, and the Stolz–Cesàro theorem. 
The details of the proof are reported in Appendix \ref{proofcesaro}.
%
%
%
%
%
%
%
%
%
%
%
%
%
%
%
%
%
%
%
%
%
%
%
%
%
%
%
%
%

\end{proof}
Given the above convergence results, the next step of the analysis is to investigate the degree of dependence between each of the integrals $I_j$ and $I_{j+k}$ for any $j,k\in\mathbb{N}^*$. To do so, we investigate the joint distribution function of $D^{[p]}_{j}$ and $D^{[p]}_{j+k}$, as seen below. 
\begin{proposition}
Let $f_{D^{[p]}_{j},D^{[p]}_{j+k}}(z_j,z_{j+k})$ denote the joint distribution function of the RVs $D^{[p]}_{j}$ and $D^{[p]}_{j+k}$. There exists a constant $C'>0$ such that
\begin{equation}
|f_{D^{[p]}_{j},D^{[p]}_{j+k}}(z_j,z_{j+k})-\frac{1}{p^2}|\leq C'(\frac{C\alpha^{j}}{p}+\frac{C\alpha^{k}}{p}),
\end{equation}
for all $z_j,z_{j+k}\in[0,p]$ and $j,k\in\mathbb{N}^*$.  
\label{jointresults}
\end{proposition}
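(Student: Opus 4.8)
The plan is to exploit the increment (renewal) structure of the sequence $D_n$ together with the marginal bound already proved in Proposition \ref{theoremadduniform}. First I would write the later stage as the earlier stage plus an independent increment,
\begin{equation}
D_{j+k}=D_j+R_{j,k},\qquad R_{j,k}=\sum_{i=j+1}^{j+k}(X_i+Y_i),
\end{equation}
and observe that $D_j$ is a function of $X_1,\dots,X_j,Y_1,\dots,Y_j$ only, while $R_{j,k}$ is a function of $X_{j+1},\dots,X_{j+k},Y_{j+1},\dots,Y_{j+k}$ only. Since these are disjoint families of mutually independent variables, $D_j$ and $R_{j,k}$ are independent, and hence so are their residues $D_j^{[p]}$ and $R_{j,k}^{[p]}=R_{j,k}\bmod p$. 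Because reduction modulo $p$ commutes with addition, reducing the identity above gives the key representation
\begin{equation}
D_{j+k}^{[p]}=\bigl(D_j^{[p]}+R_{j,k}^{[p]}\bigr)\bmod p.
\end{equation}

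Next I would control the increment and then factorize the joint density. Because $R_{j,k}$ is a sum of exactly $k$ i.i.d.\ pairs $(X_i,Y_i)$, it is equal in distribution to $D_k=\sum_{i=1}^{k}(X_i+Y_i)$, so $R_{j,k}^{[p]}$ has the same law as $D_k^{[p]}$; Proposition \ref{theoremadduniform} applied with index $k$ then gives $\sup_{r\in[0,p]}\bigl|f_{R_{j,k}^{[p]}}(r)-\tfrac{1}{p}\bigr|\leq C\alpha^{k}/p$. Writing $A=D_j^{[p]}$ and $B=R_{j,k}^{[p]}$, conditioning on $A=z_j$ turns $D_{j+k}^{[p]}$ into $(z_j+B)\bmod p$; since the shift $b\mapsto(z_j+b)\bmod p$ is a measure-preserving bijection of $[0,p)$ with unit Jacobian, the conditional density of $D_{j+k}^{[p]}$ at $z_{j+k}$ equals $f_B\bigl((z_{j+k}-z_j)\bmod p\bigr)$. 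Independence of $A$ and $B$ then yields the clean factorization
\begin{equation}
f_{D_j^{[p]},D_{j+k}^{[p]}}(z_j,z_{j+k})=f_A(z_j)\,f_B\bigl((z_{j+k}-z_j)\bmod p\bigr).
\end{equation}

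With the factorization in hand, the estimate is a routine control of the product of two near-uniform densities. Setting $\varepsilon_A=f_A(z_j)-\tfrac{1}{p}$ and $\varepsilon_B=f_B\bigl((z_{j+k}-z_j)\bmod p\bigr)-\tfrac{1}{p}$, with $|\varepsilon_A|\leq C\alpha^{j}/p$ and $|\varepsilon_B|\leq C\alpha^{k}/p$ from Proposition \ref{theoremadduniform}, I would expand
\begin{equation}
f_{D_j^{[p]},D_{j+k}^{[p]}}(z_j,z_{j+k})-\frac{1}{p^2}=\frac{\varepsilon_A}{p}+\frac{\varepsilon_B}{p}+\varepsilon_A\varepsilon_B,
\end{equation}
bound the two linear terms by $\tfrac{1}{p}\bigl(\tfrac{C\alpha^{j}}{p}+\tfrac{C\alpha^{k}}{p}\bigr)$, and bound the bilinear term by $\tfrac{C^2\alpha^{j+k}}{p^2}$. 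Since $\alpha\in(0,1)$ we have $\alpha^{j+k}\leq\min(\alpha^{j},\alpha^{k})\leq\tfrac12(\alpha^{j}+\alpha^{k})$, so the cross term is absorbed into the same shape, giving the claim with, for instance, $C'=(2+C)/(2p)$.

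I expect the only genuinely delicate step to be the factorization: justifying that conditioning on $D_j^{[p]}$ reduces $D_{j+k}^{[p]}$ to a shifted-modulo copy of the \emph{independent} increment $R_{j,k}^{[p]}$, so that the joint density becomes a modular convolution kernel with unit Jacobian. This rests squarely on the independence of $R_{j,k}$ from $D_j$ and on the commutation of the modulo operation with the sum; once this representation is secured, the remaining bound is an elementary bilinear estimate that uses no new ideas beyond the marginal results already established.
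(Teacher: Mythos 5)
Your proposal is correct and takes essentially the same route as the paper: both arguments rest on the factorization $f_{D_j^{[p]},D_{j+k}^{[p]}}(z_j,z_{j+k})=f_{D_j^{[p]}}(z_j)\,f_{D_k^{[p]}}\bigl((z_{j+k}-z_j)\bmod p\bigr)$, obtained because the increment $D_{j+k}-D_j$ is independent of $D_j$ and equal in distribution to $D_k$, followed by a bilinear estimate of the product of two near-uniform densities via Proposition \ref{jointresults}'s ingredients (Lemma \ref{addingsmoothesout} and Proposition \ref{theoremadduniform}). The only, harmless, divergence is in handling the cross term: you bound $\varepsilon_A\varepsilon_B$ by $C^2\alpha^{j+k}/p^2$ and absorb it using $\alpha^{j+k}\leq\tfrac{1}{2}(\alpha^{j}+\alpha^{k})$, giving $C'=(2+C)/(2p)$, whereas the paper instead bounds $f_{D_k^{[p]}}$ uniformly by a constant $M+\tfrac{1}{p}$ deduced from the boundedness of the exponential density and takes $C'=M+\tfrac{1}{p}$ --- your variant even spares that extra boundedness step.
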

\begin{proof}
In this proof, we derive the joint probability distribution and leverage the particularities of the modulo function along with the independence between the RVs $X_i$ and $Y_i$ involved. For conciseness, the proof is reported in Appendix \ref{appendixjoint}.
\end{proof}
Now that the convergence of $S_n$ is proven and the joint probability distribution $f_{D^{[p]}_{j},D^{[p]}_{j+k}}(\cdot,\cdot)$ is characterized, we can investigate the covariance of $I_j$ and $I_{j+k}$. This will allow us to measure the dependency between each of these integrals, thus enabling us to derive $\overline{\mathcal{L}}$. 
\begin{proposition}
The covariance $\mathrm{Cov}[I_j,I_{j+k}]$ for $j,k\in\mathbb{N}^*$ satisfies the following inequality
\begin{align}
\mathrm{Cov}[I_j,I_{j+k}]\leq & \overline{I}^2(1+pC'C(\alpha^{j}+\alpha^{k})-(1-\frac{C\alpha^{j-1}}{p})\nonumber\\&(1-\frac{C\alpha^{j+k-1}}{p})).
\end{align}
\label{covarianceproposition}
\end{proposition}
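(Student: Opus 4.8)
The plan is to start from the elementary identity $\mathrm{Cov}[I_j,I_{j+k}]=\mathbb{E}[I_jI_{j+k}]-\mathbb{E}[I_j]\mathbb{E}[I_{j+k}]$ and to bound the two terms in opposite directions: I would upper bound the cross moment $\mathbb{E}[I_jI_{j+k}]$ using the joint density estimate of Proposition \ref{jointresults}, and lower bound the product of marginals $\mathbb{E}[I_j]\mathbb{E}[I_{j+k}]$ using the marginal density estimate of Proposition \ref{theoremadduniform}. Throughout, the recurring computational device is the periodicity identity $\int_0^p\left(\int_{d-y}^{d}U(t)\,dt\right)dd = y\,p\,\overline{U}$, obtained by Fubini and the periodicity of $U$ exactly as in the proof of Proposition \ref{convergenceofthesum}; it converts every spatial average of $I$ over a near-uniform location into the clean factor $y\overline{U}$.

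For the cross moment, I would first condition on the pair $(Y_j,Y_{j+k})=(y_j,y_{j+k})$, so that $I_j$ and $I_{j+k}$ become the deterministic functionals $\int_{D_j^{[p]}-y_j}^{D_j^{[p]}}U$ and $\int_{D_{j+k}^{[p]}-y_{j+k}}^{D_{j+k}^{[p]}}U$ of the two locations only. Integrating against the joint density and inserting the bound $f_{D_j^{[p]},D_{j+k}^{[p]}}\le \frac{1}{p^2}+C'(\frac{C\alpha^j}{p}+\frac{C\alpha^k}{p})$ from Proposition \ref{jointresults}, which is legitimate because both integrands are non-negative, the leading $\frac{1}{p^2}$ term produces $y_jy_{j+k}\overline{U}^2$ via the periodicity identity applied in each variable, while the correction term contributes an additional $y_jy_{j+k}\overline{U}^2\,pC'C(\alpha^j+\alpha^k)$. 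Taking expectation over $(Y_j,Y_{j+k})$ and using that $Y_j$ and $Y_{j+k}$ are independent with common mean, so $\mathbb{E}[Y_jY_{j+k}]=\mathbb{E}[Y]^2$, turns $y_jy_{j+k}\overline{U}^2$ into $\overline{I}^2$ and yields $\mathbb{E}[I_jI_{j+k}]\le\overline{I}^2(1+pC'C(\alpha^j+\alpha^k))$.

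For the product of marginals, I would write $I_j=\int_{R_j}^{R_j+Y_j}U$ with $R_j=(D_{j-1}+X_j)\bmod p$ independent of $Y_j$, and apply Lemma \ref{addingsmoothesout} with $A=D_{j-1}^{[p]}$ (controlled by Proposition \ref{theoremadduniform} at index $j-1$) and $B=X_j\bmod p$ to get $f_{R_j}\ge\frac1p-\frac{C\alpha^{j-1}}{p}$; conditioning on $Y_j$, the periodicity identity then gives the one-sided bound $\mathbb{E}[I_j]\ge\overline{I}(1-\frac{C\alpha^{j-1}}{p})$, and likewise $\mathbb{E}[I_{j+k}]\ge\overline{I}(1-\frac{C\alpha^{j+k-1}}{p})$. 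Since both lower bounds are non-negative for the indices of interest and $I_j\ge0$, multiplying them gives $\mathbb{E}[I_j]\mathbb{E}[I_{j+k}]\ge\overline{I}^2(1-\frac{C\alpha^{j-1}}{p})(1-\frac{C\alpha^{j+k-1}}{p})$. Subtracting this from the cross-moment bound reproduces the claimed inequality.

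The step I expect to be the main obstacle is justifying the use of Proposition \ref{jointresults} inside the conditional expectation: the two locations $D_j^{[p]}$ and $D_{j+k}^{[p]}$ both contain the very variables $Y_j$ and $Y_{j+k}$ on which I condition, so I must argue that fixing these two maintenance times leaves the joint-density estimate intact. This should hold because the near-uniformity underlying Theorem \ref{theoremuniform} and Proposition \ref{jointresults} is driven entirely by the exponential inter-arrival times $X_i$, which are independent of every $Y_i$ by construction; fixing $Y_j$ and $Y_{j+k}$ merely shifts the two partial sums modulo $p$ by a constant and cannot degrade the bound, since the uniform law is shift-invariant on the circle. Making this conditional version of Proposition \ref{jointresults} precise, rather than re-deriving it from scratch, is where the care is needed; the remaining manipulations are the two applications of the periodicity identity and the factorization $\mathbb{E}[Y_jY_{j+k}]=\mathbb{E}[Y]^2$.
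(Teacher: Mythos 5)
Your proposal is correct and follows essentially the same route as the paper's proof in Appendix~\ref{appendixcovarianceproposition}: decompose the covariance, upper-bound $\mathbb{E}[I_jI_{j+k}]$ by $\overline{I}^2\bigl(1+pC'C(\alpha^{j}+\alpha^{k})\bigr)$ by inserting the joint-density estimate of Proposition~\ref{jointresults} and applying the periodicity/Fubini identity in each variable together with $\mathbb{E}[Y_jY_{j+k}]=\mathbb{E}[Y]^2$, and lower-bound the marginal means via $|\mathbb{E}[I_j]-\overline{I}|\leq\frac{C\alpha^{j-1}}{p}\overline{I}$, which is exactly eq.~(\ref{convergenceibar}) that the paper cites rather than re-derives as you do. The conditioning subtlety you flag (the joint-density bound being used inside the expectation over $Y_j,Y_{j+k}$, which also enter the locations) is indeed glossed over by the paper, and your resolution --- that conditionally on the $Y_i$'s the locations are constant shifts modulo $p$ of sums of the $X_i$'s, so the near-uniformity bound of Theorem~\ref{theoremuniform} is shift-invariant on the circle --- is the right way to make that step rigorous.
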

\begin{proof}
The proof revolves around a derivation of the covariance of $I_j$ and $I_{j+k}$, and leveraging the propositions and lemmas provided thus far, along with various algebraic manipulations. The details of the proof are reported in Appendix \ref{appendixcovarianceproposition}. 
\end{proof}
Interestingly, we can conclude from the results above that when $j$ is large, and the gap between $j$ and $j'>j$ is also large, then the two integrals $I_j$ and $I_{j'}$ become uncorrelated. This is going to be essential to establish the main results of our mathematical analysis below.
\begin{theorem}
\label{theoremfinal}
If (i) the anomalies inter-arrival times are exponentially distributed, (ii) the maintenance times verify Assumption \ref{assumptionony}, and (iii) $U(t)$ satisfies Assumption \ref{assumptiononut}, then the expected loss $\overline{\mathcal{L}}$ converges in probability to 
\begin{equation}
\overline{\mathcal{L}}_{\infty}=\frac{\mathbb{E}[Y]\overline{U}}{\mathbb{E}[X]+\mathbb{E}[Y]}.
\end{equation}
\end{theorem}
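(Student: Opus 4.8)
The plan is to treat the numerator and denominator of the expression (\ref{expressiontaba3lasess}) separately and then recombine the two limits. For the denominator, $X_j$ and $Y_j$ are i.i.d. with finite means $\mathbb{E}[X]=1/\lambda$ and $\mathbb{E}[Y]$ (the latter finite by Assumption \ref{assumptionony}), so the strong law of large numbers gives
\begin{equation}
\frac{1}{n}\sum_{j=1}^{n}(X_j+Y_j)\xrightarrow{n\to+\infty}\mathbb{E}[X]+\mathbb{E}[Y]
\end{equation}
almost surely, and hence in probability, toward a strictly positive constant. The whole difficulty therefore lies in the numerator $N_n:=\frac{1}{n}\sum_{j=1}^{n}I_j$.

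Since the $I_j$ are neither independent nor identically distributed, no standard law of large numbers applies directly. Instead I would establish convergence of $N_n$ in $L^2$, which implies convergence in probability. Proposition \ref{convergenceofthesum} already supplies $\mathbb{E}[N_n]=S_n\to\overline{I}=\mathbb{E}[Y]\,\overline{U}$, so it suffices to prove $\mathrm{Var}(N_n)\to0$; indeed one then has $\mathbb{E}[(N_n-\overline{I})^2]=\mathrm{Var}(N_n)+(S_n-\overline{I})^2\to0$. I would write
\begin{equation}
\mathrm{Var}(N_n)=\frac{1}{n^2}\sum_{j=1}^{n}\mathrm{Var}[I_j]+\frac{1}{n^2}\sum_{j\neq l}\mathrm{Cov}[I_j,I_l]
\end{equation}
and bound the two pieces separately. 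For the diagonal, Assumption \ref{assumptiononut} gives $0\leq U\leq K$, hence $I_j\leq K Y_j$ and $\mathbb{E}[I_j^2]\leq K^2\mathbb{E}[Y_j^2]<\infty$ by Assumption \ref{assumptionony}; summing $n$ such bounded terms and dividing by $n^2$ yields a diagonal contribution of order $1/n$.

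For the off-diagonal terms I would invoke Proposition \ref{covarianceproposition}. Expanding the product $(1-\tfrac{C\alpha^{j-1}}{p})(1-\tfrac{C\alpha^{j+k-1}}{p})$ shows that the stated bound on $\mathrm{Cov}[I_j,I_{j+k}]$ is dominated by a constant multiple of $\alpha^{j}+\alpha^{k}$, the quadratic negative cross term only helping. Because $\alpha<1$, summing over $1\leq j\leq n$ and $1\leq k\leq n-j$ gives $\sum_{j,k}(\alpha^{j}+\alpha^{k})=O(n)$ (each geometric sum over one index contributes a bounded factor), so the off-diagonal contribution is also of order $1/n$. Combining the two pieces yields $\mathrm{Var}(N_n)=O(1/n)\to0$, and therefore $N_n\to\mathbb{E}[Y]\,\overline{U}$ in probability.

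Finally I would recombine the limits: the numerator converges in probability to $\mathbb{E}[Y]\,\overline{U}$ while the denominator converges almost surely to the strictly positive constant $\mathbb{E}[X]+\mathbb{E}[Y]$, so by Slutsky's theorem (equivalently, continuity of the ratio map at a nonzero denominator) the quotient $\overline{\mathcal{L}}$ converges in probability to $\overline{\mathcal{L}}_{\infty}=\frac{\mathbb{E}[Y]\,\overline{U}}{\mathbb{E}[X]+\mathbb{E}[Y]}$. I expect the main obstacle to be the variance control of the numerator: the dependence and non-identical laws of the $I_j$ mean everything hinges on the exponential decay of the covariances from Proposition \ref{covarianceproposition}, which is exactly what forces the double covariance sum to grow only linearly in $n$ rather than quadratically and thus makes $\mathrm{Var}(N_n)$ vanish.
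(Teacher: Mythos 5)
Your proposal is correct and takes essentially the same approach as the paper's proof in Appendix F: the same split into the numerator $\mathcal{N}_n=\frac{1}{n}\sum_{j=1}^{n}I_j$ and the denominator, Proposition \ref{convergenceofthesum} for the mean, the diagonal bound $\mathbb{E}[I_j^2]\leq K^2\mathbb{E}[Y^2]$ together with the covariance decay of Proposition \ref{covarianceproposition} to get $\mathrm{Var}(\mathcal{N}_n)=O(1/n)$, and a law of large numbers plus the continuous-mapping step (your Slutsky is the paper's Mann--Wald) for the ratio. Your packaging of the numerator's convergence as $L^2$ convergence is just an equivalent presentation of the paper's triangle-inequality, union-bound, and Chebyshev argument.
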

\begin{proof}
The proof of the theorem amounts to characterizing the probability that the difference between the numerator and $\overline{I}$ exceeds a certain arbitrarily small $\epsilon$. Using the triangular inequality and the union bound, this probability is further simplified. Then, by leveraging Chebyshev's inequality and the results presented in the paper thus far, we obtain the desired characterization. Finally, using the weak law of large numbers and the Mann-Wald theorem, we can conclude the convergence in probability of $\overline{\mathcal{L}}$ to $\overline{\mathcal{L}}_{\infty}$. For conciseness, the details of the proof are reported in Appendix \ref{appendixtheoremfinal}. 
\end{proof}
As can be seen above, the expression to which $\overline{\mathcal{L}}$ converges turns out to have a relatively simple formulation thanks to the periodicity of the function $U(t)$. Particularly, given the periodic nature of $U(t)$, the challenges in evaluating $\overline{\mathcal{L}}$ due to the dynamic nature of the utility function are overcome, hence allowing us to evaluate  $\overline{\mathcal{L}}$ in a simplified manner. Additionally, the expression has an intuitive meaning and relationship with well-established metrics. In fact, considering the system's availability expression
\begin{equation}
\textnormal{Availability}=\dfrac{\textnormal{MTBF}}{\textnormal{MTBF}+\textnormal{MTTR}},
\end{equation}
where MTBF$=\mathbb{E}[X]=\frac{1}{\lambda}$ and MTTR$=\mathbb{E}[Y]$ denote the mean time between failure and mean time to repair, respectively \cite{Medjoudj17}, and by examining the expression of $\overline{\mathcal{L}}_{\infty}$, we can deduce that
\begin{equation}
\overline{\mathcal{L}}_{\infty}=(\underbrace{1-\textnormal{Availability}}_\text{(A)})\times\underbrace{\overline{U}}_\text{(B)}.
\end{equation}
The term (A) can be seen as the probability that the system suffers from an anomaly; the term (B) is the average utility that the system delivers in a period $p$. Then, one can see that the periodicity of the utility $U(t)$, the exponential nature of anomalies' inter-arrival times, and the mathematical analysis provided in the paper lead to the intuitive form of $\overline{\mathcal{L}}_{\infty}$. In a later section, we will consider a particular application of interest and showcase the usefulness of the above-derived analytical results for the analysis of complex networks.
\begin{remark}
It is worth noting that although in our analysis we have focused on the case where one component makes up the system, our analysis can be extended to the case where the system is constituted of $N$ components. Particularly, as long as each component's operation is independent of the others, similar results to Theorem \ref{theoremfinal} can be obtained for the average expected utility not satisfied
\begin{equation}
\overline{\mathcal{L}}=\lim_{T\to+\infty} \:\frac{1}{NT}\sum_{i=1}^{N}\int_{0}^{T}U^{i}(t)W^{i}(t)dt,
\end{equation}
where $U^{i}(t)$ is the utility at time $t$ of component $i$ and $W^{i}(t)$ is a binary random variable indicating if component $i$ is suffering from an anomaly. As will be seen in Section \ref{subsectionconvergencemulti}, the consideration of large-scale systems (i.e., $N\gg 1$) will have a beneficial effect on the convergence speed of $\overline{\mathcal{L}}$.
\label{remarknetwork}
\end{remark}

\section{Generalizations of the Analysis}
\label{generalizationss}
The goal of this section is to provide extensions of our theoretical analysis to more general scenarios that could arise in practical situations. 
\subsection{Relaxing the Periodicity Assumption}
In practice, the utility function $U(t)$ does not exhibit perfect periodicity but, rather, fluctuates slightly around a regular period. These fluctuations are caused by unpredictable traffic demand and user activity. To account for this, we model the utility function $U(t)$ as a stochastic process that randomly fluctuates at each time instant $t$. Specifically, we write $U(t)$ as
\begin{equation}
U(t) = U'(t)+ B(t),\quad t\geq0,
\label{rewritingfunctionBt}
\end{equation}
where $U'(t)$ is a bounded, non-negative, periodic function with period $p$, and $B(t)$ is a stochastic process. The random fluctuations in $U(t)$ are determined by the variance of the process $B(t)$. In Section \ref{subsectionconvergencemulti}, we will use data from a real deployed network to show that these fluctuations are typically small. Next, we suppose that $B(t)$ verifies the following assumption.
\begin{assumption}[Zero-Mean Weakly Dependent Wide-Sense Stationary Random Processes]
    We say that a stochastic process $B(t)$ is a zero-mean weakly dependent wide-sense stationary random process if 
    \begin{itemize}
\item $\mathbb{E}[B(t)]=0$ for all $t\in\mathbb{R}$.
\item $\mathrm{Cov}[B(t),B(u)]=\rho(t-u)$ for all $t,u\in\mathbb{R}$.
\item $\frac{1}{T}\int_{0}^{T}|\rho(\tau)|d\tau\xrightarrow{T\rightarrow\infty}0$.
    \end{itemize}
    \label{assumptionaboutbt}
\end{assumption}
There are many processes that satisfy this condition, including the zero-mean Gaussian white noise and the Ornstein-Uhlenbeck process. For these particular processes, we extend the results of Theorem \ref{theoremfinal} below.
\begin{theorem}
\label{theoremfinalnewprocess}
If (i) the anomalies' inter-arrival times are exponentially distributed, (ii) the maintenance times verify Assumption \ref{assumptionony}, and (iii) $B(t)$ verifies Assumption \ref{assumptionaboutbt}, then the expected loss $\overline{\mathcal{L}}$ converges in probability to 
\begin{equation}
\overline{\mathcal{L}}_{\infty}=\frac{\mathbb{E}[Y]\overline{U}}{\mathbb{E}[X]+\mathbb{E}[Y]},
\end{equation}
where $\overline{U}=\frac{1}{p}\int_{0}^{p}U'(t)dt$.
\end{theorem}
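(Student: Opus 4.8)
The plan is to exploit the fact that the finite-horizon loss is \emph{linear} in the utility, so that the corrupting process $B(t)$ can be separated from the deterministic periodic part and shown to be asymptotically negligible. Writing the finite-horizon average as $\overline{\mathcal{L}}_T=\frac{1}{T}\int_0^T U(t)W(t)\,dt$ and inserting the decomposition of eq.~(\ref{rewritingfunctionBt}) gives
\begin{equation}
\overline{\mathcal{L}}_T=\underbrace{\frac{1}{T}\int_0^T U'(t)W(t)\,dt}_{(\mathrm{I})}+\underbrace{\frac{1}{T}\int_0^T B(t)W(t)\,dt}_{(\mathrm{II})}.
\end{equation}
Since $U'(t)$ is non-negative, bounded, and $p$-periodic, it satisfies Assumption \ref{assumptiononut}, so Theorem \ref{theoremfinal} applies \emph{verbatim} to term $(\mathrm{I})$ and yields $(\mathrm{I})\xrightarrow{P}\overline{\mathcal{L}}_\infty$ with $\overline{U}=\frac{1}{p}\int_0^p U'(t)\,dt$. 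The entire task therefore reduces to proving that the noise term $(\mathrm{II})$ converges in probability to $0$; the conclusion then follows from the additivity of convergence in probability (Slutsky's theorem).

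To dispatch $(\mathrm{II})$ I would prove the stronger statement that it converges to $0$ in $L^2$, which implies convergence in probability. The key modeling ingredient here is that the traffic-fluctuation process $B(t)$ is independent of the anomaly/repair dynamics, hence of $W(t)$. Under this independence, the zero-mean property of $B(t)$ gives $\mathbb{E}[B(t)W(t)]=\mathbb{E}[B(t)]\,\mathbb{E}[W(t)]=0$, so $\mathbb{E}[(\mathrm{II})]=0$. For the second moment, Fubini's theorem together with the factorization of the expectation yields
\begin{align}
\mathbb{E}\big[(\mathrm{II})^2\big]&=\frac{1}{T^2}\int_0^T\!\!\int_0^T \mathbb{E}[B(t)B(u)]\,\mathbb{E}[W(t)W(u)]\,dt\,du\nonumber\\
&=\frac{1}{T^2}\int_0^T\!\!\int_0^T \rho(t-u)\,\mathbb{E}[W(t)W(u)]\,dt\,du.
\end{align}
Since $W$ is $\{0,1\}$-valued, $0\le\mathbb{E}[W(t)W(u)]\le1$, so $\mathbb{E}[(\mathrm{II})^2]$ is bounded in absolute value by $\frac{1}{T^2}\int_0^T\int_0^T|\rho(t-u)|\,dt\,du$.

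The crux of the argument, and the step I expect to require the most care, is converting this double integral into the Cesàro average that appears in Assumption \ref{assumptionaboutbt}. Using the change of variable $\tau=t-u$ and the even symmetry $\rho(-\tau)=\rho(\tau)$ inherited from wide-sense stationarity, the standard reduction gives
\begin{align}
\frac{1}{T^2}\int_0^T\!\!\int_0^T|\rho(t-u)|\,dt\,du&=\frac{1}{T^2}\int_{-T}^{T}(T-|\tau|)\,|\rho(\tau)|\,d\tau\nonumber\\
&\le\frac{2}{T}\int_0^T|\rho(\tau)|\,d\tau.
\end{align}
By the third condition of Assumption \ref{assumptionaboutbt}, the right-hand side tends to $0$ as $T\to\infty$, whence $\mathbb{E}[(\mathrm{II})^2]\to0$. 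Combining the vanishing mean and second moment gives $(\mathrm{II})\xrightarrow{L^2}0$ and therefore $(\mathrm{II})\xrightarrow{P}0$; adding this to the limit of $(\mathrm{I})$ completes the proof. The two places demanding attention are the justification of Fubini's theorem, licensed by $|\rho(t-u)|\le\rho(0)<\infty$ where $\rho(0)=\mathbb{E}[B(t)^2]$ is finite because $B(t)$ is a wide-sense stationary process with well-defined autocovariance per Assumption \ref{assumptionaboutbt}, and the invocation of independence between $B(t)$ and $W(t)$, which is the natural modeling hypothesis underpinning the additive corruption in eq.~(\ref{rewritingfunctionBt}).
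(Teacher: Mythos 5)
Your proposal is correct and follows essentially the same route as the paper's proof: the same decomposition of $\overline{\mathcal{L}}_T$ into the periodic term handled by Theorem \ref{theoremfinal} and the noise term $\frac{1}{T}\int_0^T B(t)W(t)\,dt$, the same use of independence between $B(t)$ and $W(t)$ and the bound $0\leq \mathbb{E}[W(t)W(u)]\leq 1$, and the same reduction of the double integral to $\frac{2}{T}\int_0^T|\rho(\tau)|\,d\tau$ via stationarity and symmetry. Your packaging as $L^2$-convergence plus additivity of convergence in probability is just a restatement of the paper's Chebyshev-plus-union-bound argument, and your explicit attention to Fubini and to taking absolute values before bounding (since $\rho$ may be negative) is, if anything, slightly more careful than the paper's write-up.
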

\begin{proof}
The proof involves using Assumption \ref{assumptionaboutbt}, integral manipulations, and Chebyshev inequality to show that $B(t)$ exhibits ergodic properties. See Appendix \ref{appendixtheoremfinalnewprocess} for details. 
\end{proof}
\subsection{General Inter-Arrival Distributions}
In our system model reported in Section \ref{systmodel}, we have assumed that the anomalies' inter-arrival times are exponentially distributed. The reasons behind this assumption are twofold: 1) the data that we gathered from a large-scale deployed cellular network suggest that the exponential distribution is a good fit, as will be illustrated in Section \ref{subsectiondistributions}, and 2) the vast literature on reliability confirms that exponential distribution is found to be a good representative in many cases (e.g., \cite[Chapter~3]{doi:https://doi.org/10.1002/9781118841716.ch3},\cite{OHRING1995747}). Additionally, this assumption has allowed us to further understand the convergence rate reported in Theorem \ref{theoremuniform} and interpret the effect of the period $p$ and the anomalies' rate $\lambda$ on the convergence. Nevertheless, in this section, we aim to generalize our analytical results to a large family of distributions, particularly the family of bounded distributions. To that end, we present our results below.
\begin{theorem}
Let $f_{X}(x)$ denote the probability distribution function of the anomaly inter-arrival time $X_j$ for $j\in\mathbb{N}^*$. If there exists a finite $M>0$ such that
\begin{equation}
    f_{X}(x)\leq M,\quad x\in[0,\infty),
\end{equation}
then $f_{X^{[p]}_j}(x)$ converges uniformly to the uniform distribution on $[0,p]$. Precisely, there exists two constants $\alpha<1$ and $C>0$ such that
\begin{equation}
\sup_{x\in[0,p]} |f_{X^{[p]}_j}(x)-\frac{1}{p}|\leq \frac{C\alpha^{j}}{p},
\end{equation}
where
\begin{equation}
\alpha=\sup_{n\in\mathbb{N}^*}|\hat{g}(n)|,
\end{equation}
\begin{equation}
C=\frac{\sum_{n=1}^{\infty}|\hat{g}(n)|^{2}}{\alpha^2},
\end{equation}
and
\begin{equation}
\hat{g}(n)=\int_{0}^{\infty}f_{X}(x)e^{-2\pi i\frac{n}{p}x}dx\overset{\Delta}{=}F_X(\frac{n}{p}),
\end{equation}
where $F_X(\cdot)$ is the Fourier transform of $f_{X}(\cdot)$.
\label{theoremfouriergenerall}
\end{theorem}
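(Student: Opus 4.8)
The plan is to reproduce the Fourier-analytic mechanism behind Theorem~\ref{theoremuniform}, of which this is the general-density version, while replacing the explicit exponential computations by structural properties of $\hat g$. First I would note that $f_{X^{[p]}_j}$ is the periodization modulo $p$ of the density $f_{S_j}$ of the partial sum $S_j=\sum_{k=1}^{j}X_k$, namely $f_{X^{[p]}_j}(x)=\sum_{m\ge0}f_{S_j}(x+mp)$ for $x\in[0,p]$. This function is $p$-periodic, so it has a Fourier series whose coefficients are the samples of the Fourier transform of $f_{S_j}$ at the frequencies $n/p$. Because the $X_k$ are i.i.d., $f_{S_j}$ is the $j$-fold convolution of $f_X$, so the convolution theorem gives $\widehat{f_{S_j}}(n/p)=\hat g(n)^{j}$; hence the $n$-th Fourier coefficient of $f_{X^{[p]}_j}$ equals $\tfrac1p\hat g(n)^{j}$. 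The zeroth coefficient equals $\tfrac1p$ since $\hat g(0)=\int_0^\infty f_X=1$, which already isolates the uniform density as the $n=0$ contribution.

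Subtracting this mean term and bounding the remaining series by the triangle inequality gives, uniformly in $x\in[0,p]$,
\[
\Bigl|f_{X^{[p]}_j}(x)-\tfrac1p\Bigr|\le\frac1p\sum_{n\neq0}\lvert\hat g(n)\rvert^{\,j},
\]
and since the right-hand side does not depend on $x$ this is exactly what produces \emph{uniform} convergence. Using $\hat g(-n)=\overline{\hat g(n)}$ to pair the frequencies $\pm n$ and, for $j\ge2$, the factorization $\lvert\hat g(n)\rvert^{j}\le\alpha^{\,j-2}\lvert\hat g(n)\rvert^{2}$, this sum is at most $2\alpha^{\,j-2}\sum_{n\ge1}\lvert\hat g(n)\rvert^{2}$, a constant multiple of $\alpha^{j}$ that reproduces the announced bound $\tfrac{C\alpha^{j}}{p}$. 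It then remains to secure the two facts that make this estimate meaningful: that $\alpha=\sup_{n\ge1}\lvert\hat g(n)\rvert<1$, so that it actually decays, and that $C<\infty$.

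The main obstacle is proving $\alpha<1$, and it is here that the hypotheses on $f_X$ are used. For each fixed $n\ge1$, $\lvert\hat g(n)\rvert\le\int_0^\infty f_X=1$, with equality only if $e^{-2\pi i n x/p}$ has constant phase almost everywhere on the support of $f_X$; since $f_X$ is a genuine absolutely continuous density, its support has positive measure and cannot lie in such a null lattice, so $\lvert\hat g(n)\rvert<1$ strictly for every $n\ge1$. To turn this pointwise strictness into a uniform bound I would invoke the Riemann--Lebesgue lemma: as $f_X\in L^1$, $\hat g(n)\to0$, so only finitely many indices can come close to $1$, and the maximum over those finitely many indices is attained and strictly below $1$; therefore $\alpha<1$. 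Finiteness of $C$ is where the boundedness assumption $f_X\le M$ enters: together with $\int_0^\infty f_X=1$ it places $f_X$ in $L^1\cap L^\infty\subset L^2$, and Parseval's identity for the periodized density identifies $\sum_{n}\lvert\hat g(n)\rvert^{2}$ with $p\,\lVert f_{X^{[p]}_1}\rVert_{L^2[0,p]}^{2}$; the boundedness hypothesis is precisely what is invoked to control this quantity and yield $C<\infty$. Combining the decay $\alpha<1$, the finiteness of $C$, and the uniform-in-$x$ estimate above delivers the claimed inequality.
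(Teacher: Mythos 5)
Your proposal is, in substance, an inline re-derivation of the result the paper treats as a black box. The paper's own proof rescales to the unit interval via $X^{[p]}_j=pR^{[1]}_j$, checks that the density of $R_1$ is bounded by $pM$, and then cites Schatte's Theorem~1 to obtain the exponential uniform bound with exactly the constants $\alpha=\sup_{n\in\mathbb{N}^*}|\hat g(n)|$ and $C=\alpha^{-2}\sum_{n\geq1}|\hat g(n)|^{2}$. Your periodization identity, the convolution-theorem computation $\widehat{f_{S_j}}(n/p)=\hat g(n)^j$, the pairing of $\pm n$, and the factorization $|\hat g(n)|^{j}\leq\alpha^{j-2}|\hat g(n)|^{2}$ reconstruct precisely the mechanism inside that citation rather than invoking it, and your argument that $\alpha<1$ --- strict inequality $|\hat g(n)|<1$ for each fixed $n$ because an absolutely continuous law cannot concentrate on a lattice, upgraded to a uniform bound by Riemann--Lebesgue --- is a genuine addition: the paper never proves $\alpha<1$ but inherits it from the citation. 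Two smaller points: your $\pm n$ pairing yields $2C\alpha^{j}/p$ rather than $C\alpha^{j}/p$ (harmless), and your estimate only covers $j\geq2$, since absolute convergence of the Fourier series (and hence the pointwise representation of $f_{X^{[p]}_j}$ by a continuous version) is exactly what the factor $|\hat g(n)|^{2}$ buys you; the $j=1$ case of the stated display is not reached by your argument, as the mod-$p$ density of $X_1$ need not even be bounded.

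The one genuine soft spot is your finiteness-of-$C$ step. Parseval does identify $\sum_{n\in\mathbb{Z}}|\hat g(n)|^{2}$ with $p\lVert f_{X^{[p]}_1}\rVert_{L^2[0,p]}^{2}$, but $f_X\leq M$ does not control that quantity: the periodization $f_{X^{[p]}_1}(x)=\sum_{m\geq0}f_X(x+mp)$ sums infinitely many translates, each bounded by $M$, and the sum need not lie in $L^2[0,p]$. Concretely, take $f_X=M\mathbf{1}_A$ with $A=\bigcup_{m\geq1}[mp,\,mp+a_m]$ and $a_m\asymp m^{-1}\log^{-2}(m+2)$ normalized so that $\sum_m a_m=1/M$; then $f_{X^{[p]}_1}(x)\asymp 1/\bigl(x\log^{2}(1/x)\bigr)$ near $0$, which is integrable but not square-integrable, so $\sum_n|\hat g(n)|^{2}=\infty$ and your $C$ (and the theorem's) is vacuous for this bounded density. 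So the inference ``$f_X\in L^1\cap L^\infty$ controls the periodized $L^2$ norm'' fails as stated. It is only fair to note that the paper's proof makes the same leap at the corresponding spot --- it asserts that boundedness of $f_{R_1}$ on $[0,\infty)$ implies boundedness of the mod-$1$ density $f_{R_1^{[1]}}$, which fails for the same reason --- so your gap sits exactly where the paper's verification of Schatte's hypothesis is itself thin. The clean repair, in both arguments, is to require (or verify for the distribution at hand) that the mod-$p$ density of some partial sum $S_{j^*}$ is bounded; this holds trivially in the exponential case of Theorem~\ref{theoremuniform}, where the periodization is a convergent geometric series, and it is what actually delivers $\sum_n|\hat g(n)|^{2j^*}<\infty$ and hence a finite $C$.
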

\begin{proof}
The proof follows steps similar to those provided in Theorem \ref{theoremuniform}'s proof reported in Appendix \ref{prooffourrier}. For completeness, the details of the proof are reported in Appendix \ref{prooffourriergeneral}.
\end{proof}
Knowing that the rest of the mathematical analysis reported in Section \ref{mathanalysis} holds given the above convergence results, we can conclude that the results of Theorem \ref{theoremfinal} also hold for this general family of probability distributions. Similar conclusions can be made for 
Theorem \ref{theoremfinalnewprocess}. In essence, if the anomalies' inter-arrival time distribution is bounded, and the remaining conditions of Theorem \ref{theoremfinal} and \ref{theoremfinalnewprocess} are verified, then the expected loss $\overline{\mathcal{L}}$ converges in probability to 
\begin{equation}
\overline{\mathcal{L}}_{\infty}=\frac{\mathbb{E}[Y]\overline{U}}{\mathbb{E}[X]+\mathbb{E}[Y]}.
\end{equation}

\section{Use case: A Large Cellular Network}
\label{usecasescellular}
In this section, our aim is to verify the key assumptions adopted in our analysis, corroborate our theoretical findings, and shed light on various challenges found in our implementations. To do so, we leverage the data that we have gathered from a large long term evolution (LTE) cellular network consisting of 660 cellular base stations and serving approximately 22k users. The data gathered span two months and consist of cell-level key performance indicators (KPIs), along with troubleshooting tickets generated by the network's operator and alarms information registered by the base stations. For confidentiality reasons, the data have been scaled when necessary. 
\subsection{Periodicity of Utility Functions}
\label{subsectionutility}
The first assumption we verify in this section, given its importance to our theoretical analysis, concerns the periodicity of the utility function of the cellular sites. Typically, the utility function of a base station is taken as the amount of traffic that it provides to the customers. In other words, the utility function at time $t$ is nothing but the user demand typically served by the base station at this time instant. In Fig. \ref{utilityoriginal}, we illustrate the average traffic served by a particular base station at each hour of the week. As seen in the figure, a noticeable periodic trend can be witnessed, although minor violations of the periodicity can be observed. However, given that these violations remain minor, and by recalling Theorem \ref{theoremfinalnewprocess}, we can conclude that the periodicity assumption comes at a minor penalty. 
\begin{figure}[!ht]
\centering
\includegraphics[width=.99\linewidth]{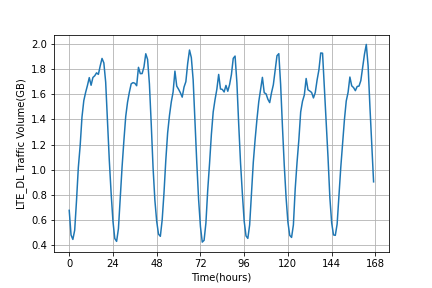}
\caption{LTE traffic demand.}
\label{utilityoriginal}
\end{figure}
\subsection{Anomalies' Key Distributions}
\label{subsectiondistributions}
Next, our goal is to investigate the distribution of both the anomalies' inter-arrival and maintenance times. To do so, we leverage the troubleshooting tickets data provided by the network's operator along with alarms information registered by the base stations. In essence, when an anomaly takes place at any base station in the network, the base station raises an alarm, and, consequently, a trouble ticket is issued by the operator. This ticket contains details about the anomaly (e.g., anomaly ID), its location, its occurrence time, and (eventually) its resolution time. As seen in Fig \ref{interarrivaldist}, the distribution of the anomalies' inter-arrival times in the overall network is very close to an exponential distribution of rate $\lambda=12.6$ anomalies/hour. Supposing that the base stations are all identical, and given the splitting property of Poisson processes \cite{bertsekastsitsiklis2008}, we can conclude that the anomalies' rate for each base station is $\lambda=0.019$ anomalies/hour.  
\begin{figure}[!ht]
\centering
\includegraphics[width=.99\linewidth]{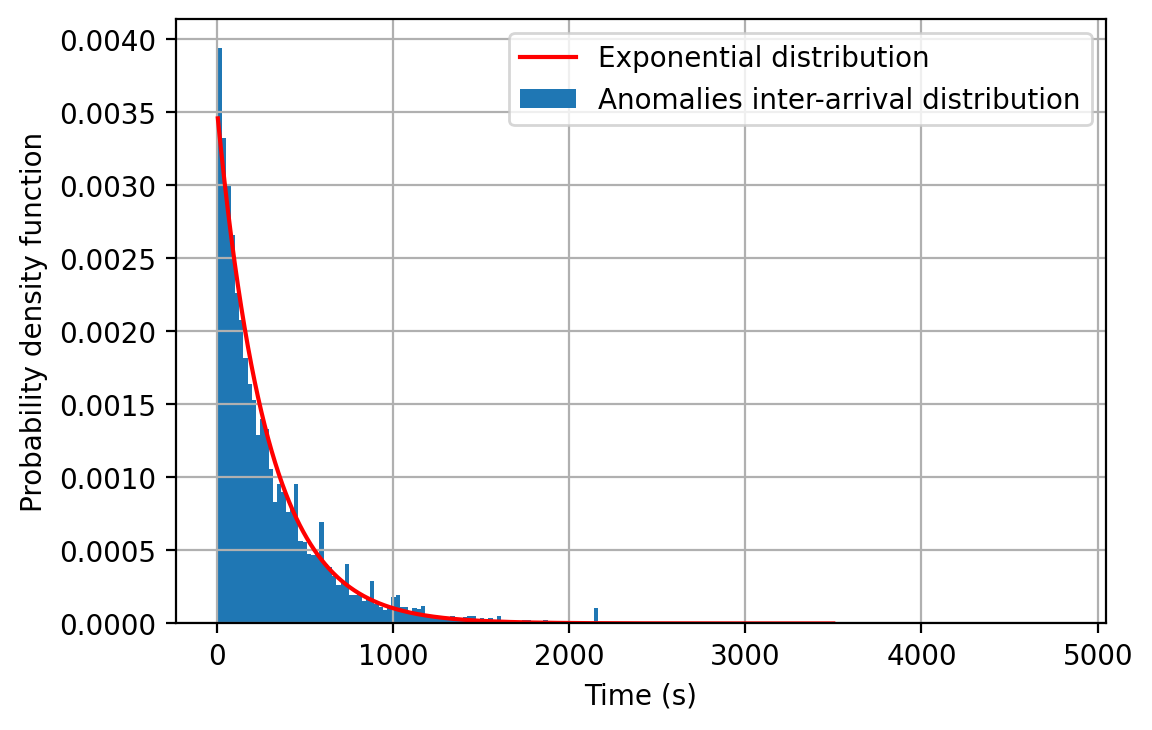}
\caption{Anomalies' inter-arrival times distribution.}
\label{interarrivaldist}
\end{figure}\\
On another note, we report in Fig. \ref{maintenancetimesss} an extract of the anomalies' maintenance time distribution. As one can see, this distribution is far from regular. Specifically, we can see that a part of the anomalies are resolved almost instantaneously by the network itself. On the other hand, other anomalies require either remote or on-site interventions that take longer time (hours, days, and sometimes weeks). Modeling such a distribution is a challenging task. However, we recall that the results reported in Section \ref{mathanalysis} hold for any general maintenance time distribution. In fact, all we need to characterize the expected utility not satisfied is the average maintenance duration, which puts into perspective the generality of our results and their practical usefulness. To that end, using the troubleshooting tickets data, we can conclude that the average maintenance time is 2 hours and 8 minutes.  
\begin{figure}[htb]
\centering
\includegraphics[width=.99\linewidth]{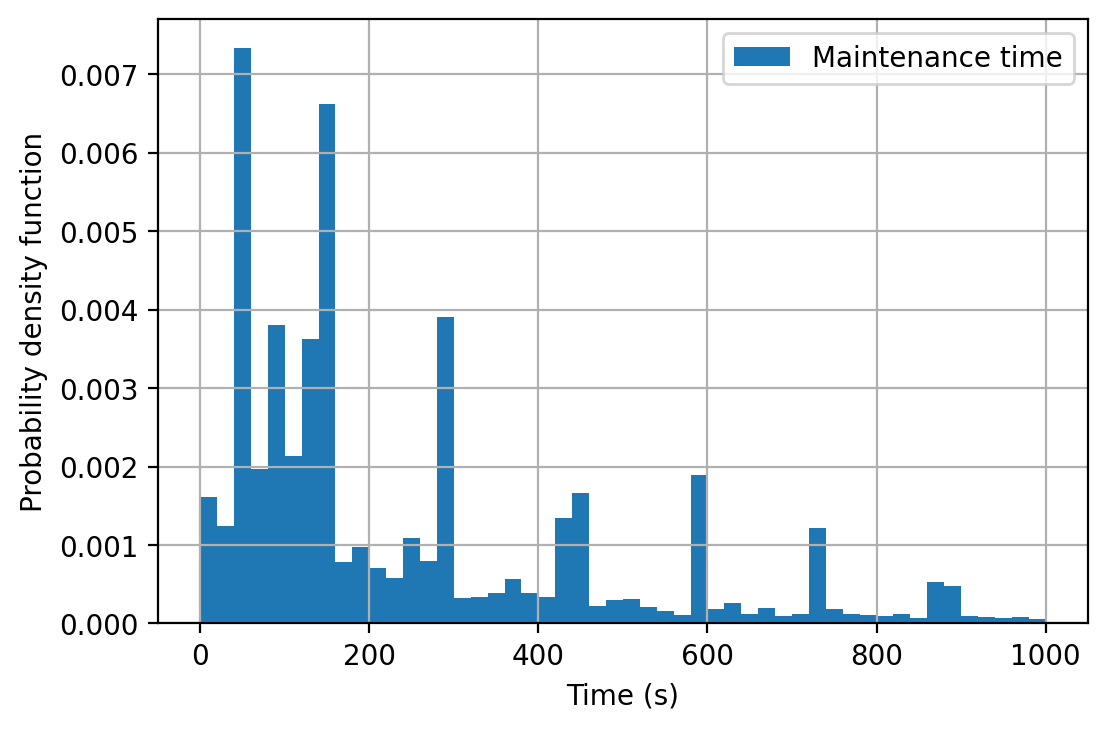}
\caption{Extract of the maintenance time distribution.}
\label{maintenancetimesss}
\end{figure}\\
With the above results in mind, along with the periodic nature of users' traffic as was illustrated in the previous section, we can conclude that the results of Theorem \ref{theoremfinal} can be used to find the expected utility not satisfied in the network. Particularly, knowing that the average traffic per hour for each base station is $\overline{U}=1.55\textnormal{ GBs/hour}$, the operator can, then, conclude that the expected traffic demand not satisfied in the entire network is
\begin{equation}
\overline{\mathcal{L}}_{\infty}=660\times\frac{2.13}{\frac{1}{0.019}+2.13}\times1.55\simeq39\:\: \textnormal{GBs/hour}
\end{equation}
All in all, we can conclude that the network loses on average around $\frac{2.13}{\frac{1}{0.019}+2.13}\simeq3.8\%$ of its traffic due to the various anomalies that may occur. This reliability score can be used by the operator to assess the network's performance, with respect to its business objective. Further planning and network upgrades can, then, take place if necessary, and a re-evaluation of the score can be done to conclude on the efficacy of the proposed upgrades.
\subsection{Convergence Speed: Single Cell}
Thus far, we have been interested in verifying the key assumptions adopted in our theoretical analysis and characterizing the distributions needed to evaluate the theoretical limit of the expected demand not satisfied $\overline{\mathcal{L}}$, denoted by $\overline{\mathcal{L}_{\infty}}$. However, investigating how fast we converge to this limit $\overline{\mathcal{L}_{\infty}}$ is of paramount interest. To that end, let 
\begin{equation}
\overline{\mathcal{L}}_T=\frac{1}{T}\int_{0}^{T}U(t)W(t)dt
\label{expectedlossfinitetime}
\end{equation}
denote the average traffic demand not satisfied in the time interval $[0,T]$. Equivalently, as we have done in eq. (\ref{expressiontaba3lasess}), we can define the average demand not satisfied in the anomaly-repair cycles $[0,n]$ as
\begin{equation}
\overline{\mathcal{L}}_n=\frac{\frac{1}{n}\sum_{j=1}^{n}\int_{D_j ^{[p]}-Y_j}^{D^{[p]}_j}U(t)dt}{\frac{1}{n}\sum_{j=1}^{n}(X_j+Y_j)}.
\end{equation}
With these definitions in mind, our aim is to find at what stage $n^*$ the relative error $
\frac{|\overline{\mathcal{L}}_n-\overline{\mathcal{L}_{\infty}}|}{\overline{\mathcal{L}}_n}$ becomes smaller than $10\%$ for $n\geq n^*$. To do so, we first consider that the inter-arrival time between anomalies follows an exponential distribution with rate $\lambda = 0.019$ hours$^{-1}$. As for the maintenance time, to facilitate our convergence speed study, we consider that the maintenance time is exponentially distributed with rate $\mu=0.47$ hours$^{-1}$, which is in accordance with the average maintenance time duration observed in the data. Note that the insights we are about to provide would still hold for other general maintenance time distributions. Finally, concerning the utility function, we suppose that the utility function is perfectly periodic and can be written as
\begin{equation}
U(t)=1.75\sin(\frac{2\pi t}{p})+3, \quad t\geq0,
\label{unumerical}
\end{equation}
where $p=24$ hours. The above function mimics very well the utility function illustrated in Fig. \ref{utilityoriginal}. We report our results in Fig. \ref{convergenceonecell}. As can be seen, the loss function $\overline{\mathcal{L}}_n$ converges to $\overline{\mathcal{L}}_{\infty}$, as Theorem \ref{theoremfinal} predicts. Particularly, it takes on average $n^*=350$ anomaly-repair cycles for the relative error to become smaller than $10\%$. Given the average time of each cycle, we can conclude that the convergence time can be quite slow. The reason behind this is that although the convergence of $D^{[p]}_j$ to a uniform distribution happens exponentially fast, as shown in Proposition \ref{theoremadduniform}, the overall convergence of $\overline{\mathcal{L}}_n$ to $\overline{\mathcal{L}_{\infty}}$ reported in Theorem \ref{theoremfinal} is not exponentially fast. Particularly, as can be seen in Appendix \ref{appendixtheoremfinal}, the convergence speed is inversely proportional to $n$ and depends heavily on the correlation between the different quantities reported in Proposition \ref{covarianceproposition} and the variance of the random variables involved. However, as we will show in the next section, this convergence speed becomes substantially faster when a large-scale network is considered, rather than an individual base station.
\begin{figure}[!ht]
\centering
\includegraphics[width=.99\linewidth]{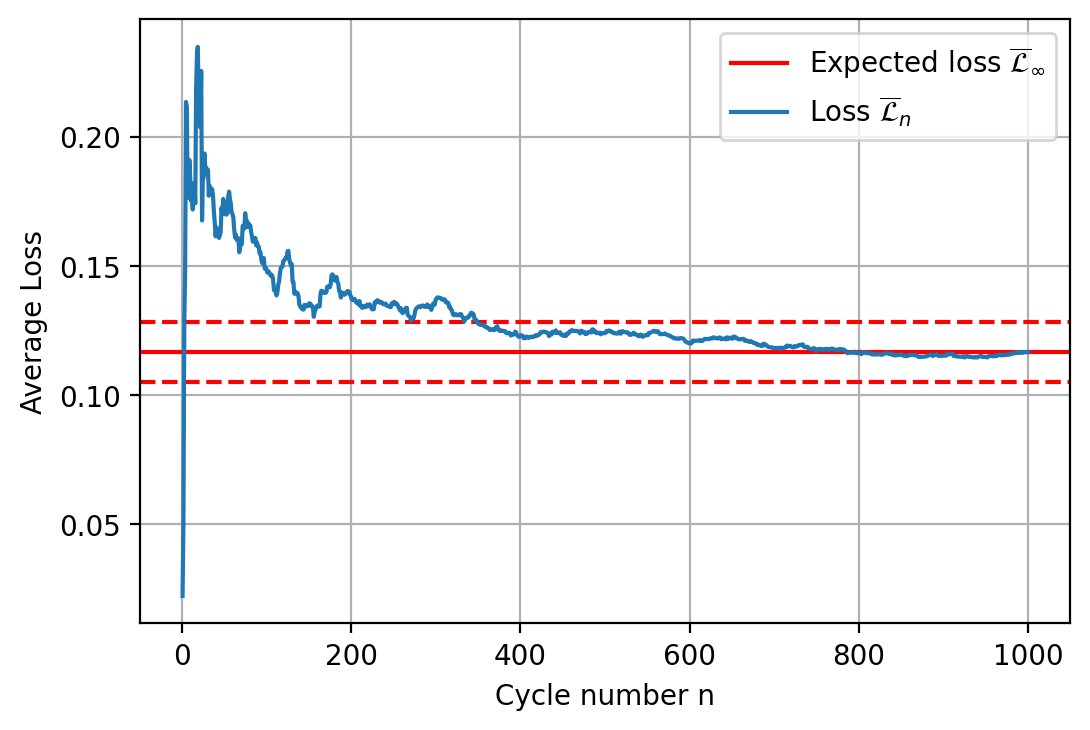}
\caption{Illustration of the convergence of the cellular network with one cell}
\label{convergenceonecell}
\end{figure}
\subsection{Convergence Speed: Multiple Cells}
\label{subsectionconvergencemulti}
In practice, the network's operator is interested in evaluating the expected utility loss of the whole network rather than that of a particular cell. Specifically, if the network is made of $N$ cells, then we are interested in the speed of convergence of 
\begin{equation}
\overline{\mathcal{L}}_T=\:\frac{1}{NT}\sum_{i=1}^{N}\int_{0}^{T}U^{i}(t)W^{i}(t)dt,
\end{equation}
to its limit $\overline{\mathcal{L}}_{\infty}$, where $U^{i}(t)$ is the utility at time $t$ of cell $i$ and $W^{i}(t)$ is a binary random variable indicating if cell $i$ is suffering from an anomaly. To investigate this convergence speed, we adopt the same settings of the previous subsection and compare the multi-cell case to the single-cell one. Note that since anomalies may occur at different instants at the various cells, we report the average convergence speed in time units rather than anomaly-repair cycles. The results are reported in Fig. \ref{cvcompare}. 
\begin{figure}[!ht]
\centering
\includegraphics[width=.99\linewidth]{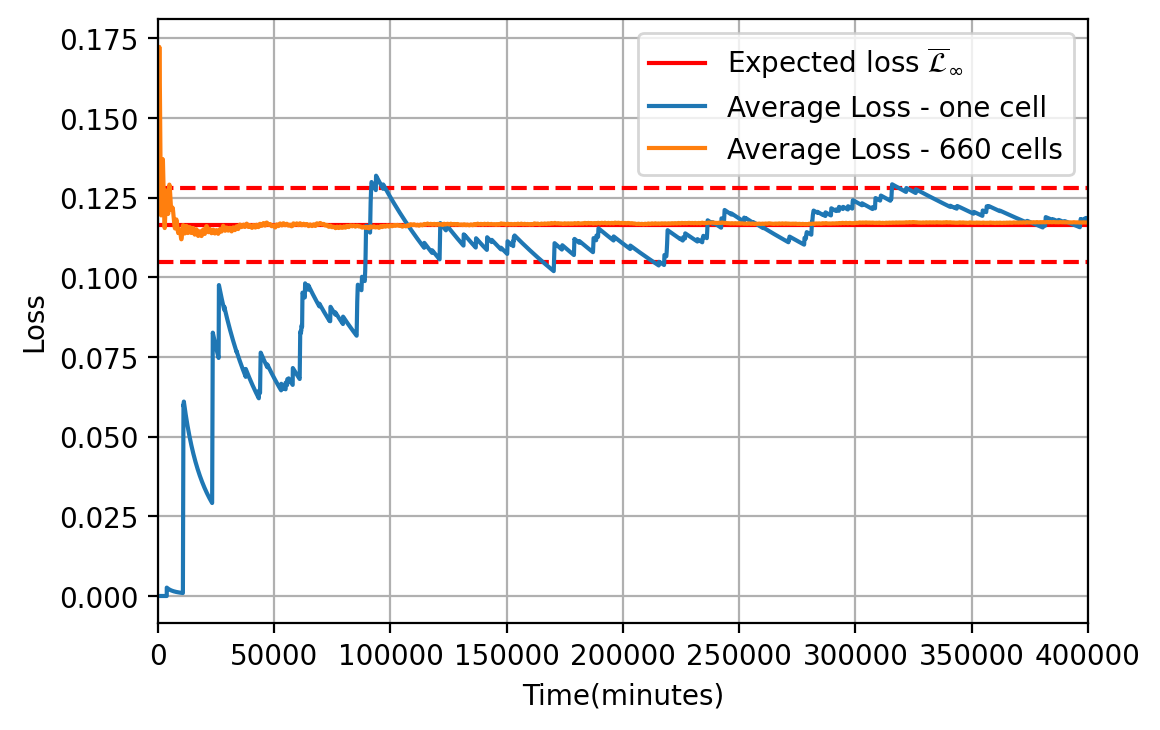}
\caption{Comparison of the convergence time in single and multi-cell scenarios.}
\label{cvcompare}
\end{figure}\\
As can be seen, the average convergence speed for the whole network's average utility loss is around $41$ hours, which is significantly faster than the single-cell one. In other words, by letting the network run for a couple of days, our evaluation of the expected utility not satisfied through the simple theoretical formula we provided in Theorem \ref{theoremfinal} would already be very accurate. Given that networks are typically constituted of a large number of sub-systems (e.g., cells, routers, etc.), we can conclude that this convergence speed is particularly appealing. To understand this trend, we note that the large number of cells leads to a spatial averaging effect. Particularly, as more cells are added, the variance reported in Appendix \ref{appendixtheoremfinal} will be reduced, thus greatly accelerating the convergence speed. The question that remains is the following: how big does the network need to be to obtain such fast convergence? To answer that question, we simulate $100$ different realizations of the network's operation for a variety of network sizes. The results are reported in Fig. \ref{cvwithncell}. As can be observed, the average convergence time decreases with the number of cells, reaching as low as $1$ week for a network size of $100$ cells. 
\begin{figure}[!ht]
\centering
\includegraphics[width=.99\linewidth]{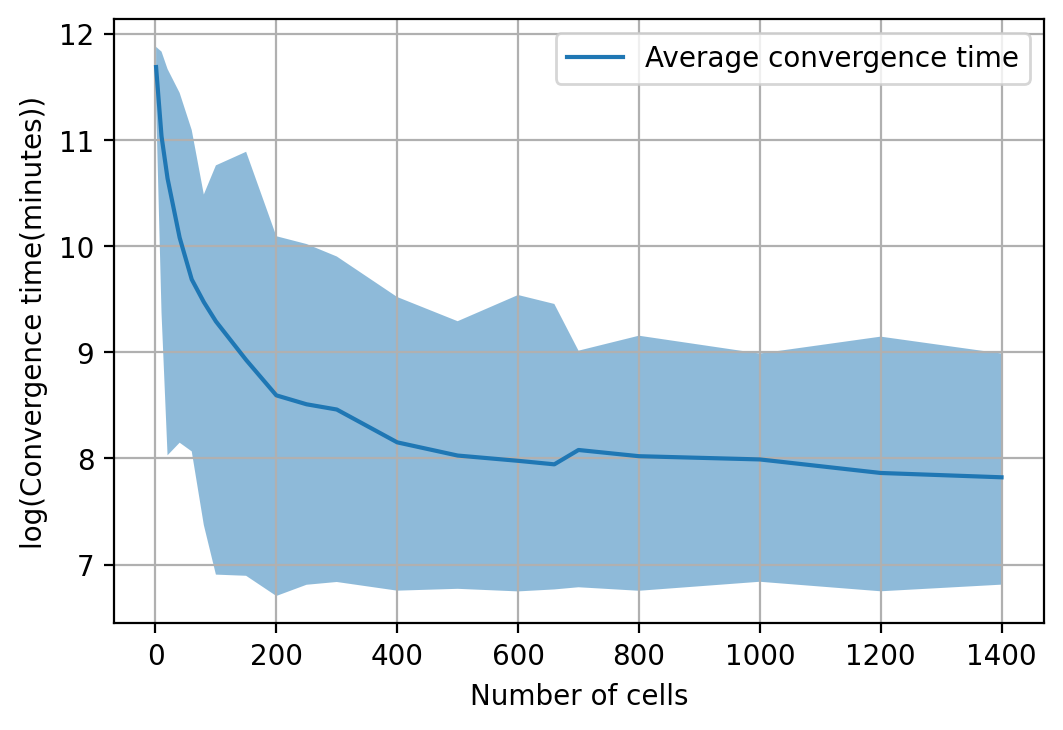}
\caption{Evolution of the convergence time with the number of cells. }
\label{cvwithncell}
\end{figure}\\
Finally, we examine the impact of violating the assumption of perfect periodicity on the convergence speed of $\overline{\mathcal{L}}_T$. To do this, we compare the perfect periodicity case to the scenario where $U(t)$ in eq. (\ref{unumerical}) is corrupted by an Ornstein-Uhlenbeck process $B(t)$. Specifically, $B(t)$ satisfies the following stochastic differential equation:
\begin{equation}
dB(t)=-\theta B(t)dt+\sigma dR(t),
\end{equation}
where $R(t)$ is the Wiener process. By setting $\theta=1$ and $\sigma=0.01$, we plot the relative error $\frac{|\overline{\mathcal{L}}_T-\overline{\mathcal{L}}_{\infty}|}{\overline{\mathcal{L}}_T}$ as a function of $T$. As shown in Fig. \ref{cvnoise}, the relative error between the expected loss and its limit calculated by our formula quickly converges to zero. In fact, the convergence speed of the two cases is similar, with both reaching the $10\%$ mark at around 41 hours. This confirms the results of Theorem \ref{theoremfinalnewprocess}, which states that even with minor, random fluctuations from the perfect periodicity regime, our convergence results still hold.
\begin{figure}[!ht]
\centering
\includegraphics[width=.99\linewidth]{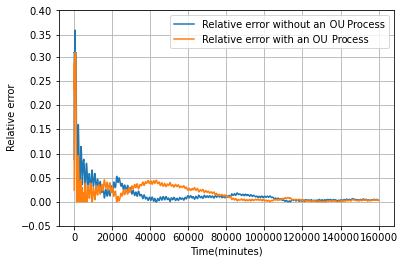}
\caption{Evolution of the relative error between the loss and its expected limit as a function of $T$.}
\label{cvnoise}
\end{figure}
\subsection{Non-Stationarity}
Thus far, we have assumed in our analysis that the inter-arrival times and maintenance times of the anomalies follow stationary distributions. This means that the statistics of these distributions, such as the expected values $\mathbb{E}[X]$ and $\mathbb{E}[Y]$, do not change over time. However, it is important to verify the accuracy of this assumption. To do so, we can use our real data to plot 
\begin{equation}
\Delta=\frac{\mathbb{E}[Y]}{\mathbb{E}[X]+\mathbb{E}[Y]}=1-\textnormal{Availability}
\end{equation}
to see if it remains constant over time. Our attention is on $\Delta$, as the theoretical limit $\mathcal{L}_{\infty}$ depends on the aforementioned distributions through $\Delta$. Fig. \ref{percentageabnormalcell} demonstrates that $\Delta$ has minimal fluctuations, causing minor violations of the stationarity assumption. Nevertheless, these fluctuations have minimal impact on the convergence time, as can be seen in Fig. \ref{estimationerror}. The relative error $\frac{|\overline{\mathcal{L}}_T-\overline{\mathcal{L}}_{\infty}|}{\overline{\mathcal{L}}_T}$ fluctuates minimally and remains close to zero. 
\begin{figure}[!ht]
\centering
\includegraphics[width=.99\linewidth]{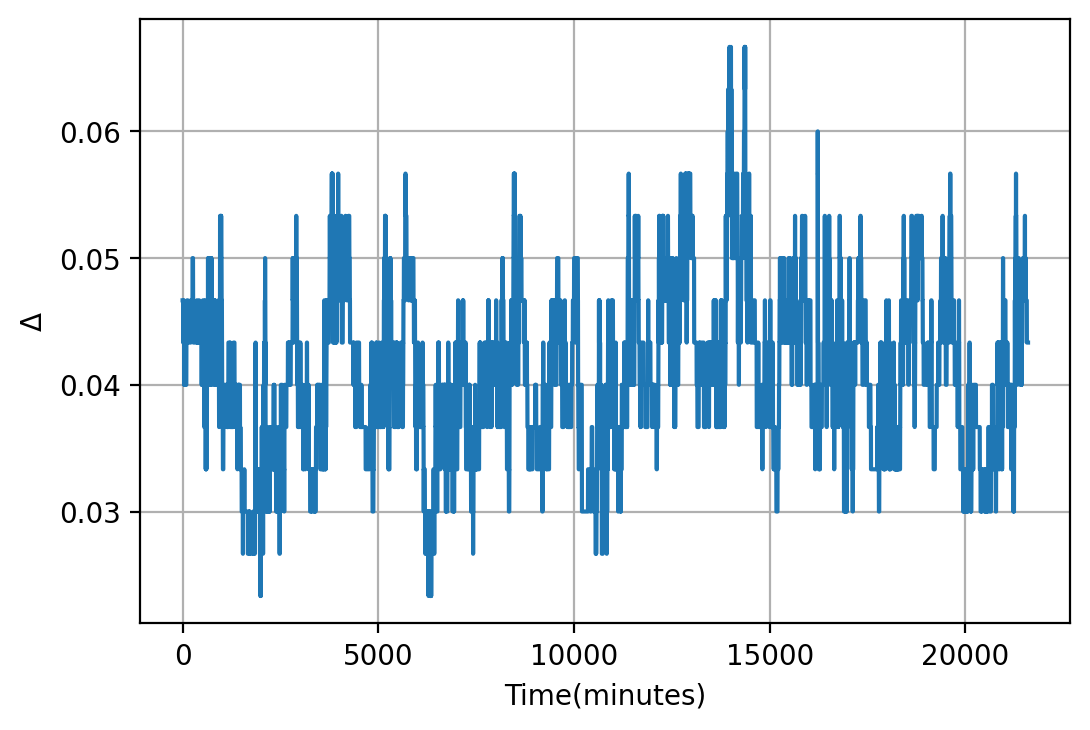}
\caption{Variation of $\Delta$ as a function of time.}
\label{percentageabnormalcell}
\end{figure}
\begin{figure}[!ht]
\centering
\includegraphics[width=.99\linewidth]{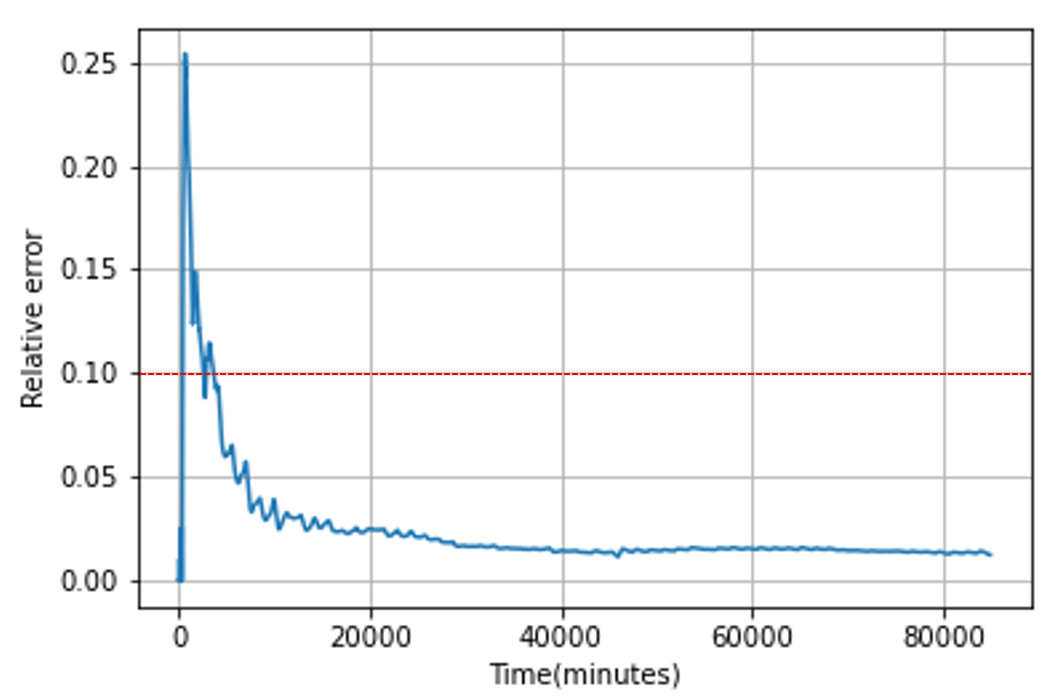}
\caption{Variation of the relative error as a function of time.}
\label{estimationerror}
\end{figure}

\section{Conclusions and Future Work}
\label{conclusionsss}
In this paper, we have considered the evaluation of the reliability of large networked systems with respect to a periodic time-dependent utility function related to the system's service performance over time. Under the assumption of exponential anomalies' inter-arrival times and general distributions of maintenance time duration, we have leveraged the periodicity of the utility function to derive the expected utility loss due to the system's anomalies. In these settings, we have shown that the expected utility loss converges in probability to a simple form. We have also extended our results to the case where the periodicity of the utility is slightly violated and the distribution of the inter-arrival times of anomalies is general. We applied our analysis to a cellular network use case with real field data gathered from a deployed cellular network and demonstrated its usefulness. Additionally, we verified our key assumptions using the available data and investigated the interplay between various network parameters.
\color{black}
\bibliographystyle{IEEEtran}
\bibliography{References}
\appendices
\section{Proof of Theorem \ref{theoremuniform}}
\label{prooffourrier}
Our proof revolves around a Fourier analysis of the distribution of $X^{[p]}_j$. To proceed in this direction, let us first note that by definition
\begin{equation}
\sum_{k=1}^{j}X_k=mp+X^{[p]}_j=p(m+\frac{X^{[p]}_j}{p}),
\end{equation}
where $m\in\mathbb{N}$. By dividing both sides by $\frac{1}{p}$, we can conclude that
\begin{equation}
X^{[p]}_j=pR_j^{[1]},
\end{equation}
where $R_j^{[1]}$ is the remainder of the Euclidean division of $R_j=\sum_{k=1}^{j}\frac{X_k}{p}$ by $1$. To that end, to characterize the distribution of $X^{[p]}_j$, we can start by studying that of $R_j^{[1]}$. With this in mind, the first step of our analysis in this direction consists of showing the existence of an order $j^*$ such that
\begin{equation}
\sup_{r\in[0,1]}f_{R_{j^*}^{[1]}}(r)<\infty,
\label{conditionbundedfirst}
\end{equation}
where $f_{R_{j^*}^{[1]}}(r)$ is the probability density function of $R_{j^*}^{[1]}$.
The above condition is essential to our Fourier analysis as will be seen shortly. To find $j^*$, we first note that thanks to the scaling property of the exponential distribution, $R_1$ is exponentially distributed with rate $\lambda p$. Therefore, we have
\begin{equation}
\sup_{r\in[0,\infty)}f_{R_1}(r)=f_{R_1}(0)=\lambda p.
\end{equation}
Given that $f_{R_1}(r)$ is bounded, we can assert that the same holds for $f_{R_{1}^{[1]}}(r)$. Consequently, we can conclude that $j^*=1$. Next, let
\begin{equation}
\hat{g}(n)=\int_{0}^{\infty}f_{R_1}(r)e^{-2\pi inr}dr
\end{equation}
denote the nth Fourier coefficient of the probability density function of $R_1$. Knowing that $R_1$ is exponentially distributed, we obtain
\begin{align}
\hat{g}(n)&=\int_{0}^{\infty}\lambda p e^{-\lambda p r}e^{-2\pi inr}dr=\frac{\lambda p}{\lambda p+2\pi in}.
\end{align}
Given the above expression, we can conclude that $|\hat{g}(n)|$ is decreasing with $n$. Therefore, we have
\begin{equation}
\alpha=\sup_{n\in\mathbb{N}^*}|\hat{g}(n)|=|\hat{g}(1)|=\frac{\lambda p}{\sqrt{\lambda^2p^2+4\pi^2}}.
\end{equation}
Given that $\alpha<1$ and that condition (\ref{conditionbundedfirst}) holds, we can leverage \cite[Theorem~1]{schatte1984asymptotic} to deduce that 
\begin{equation}
\sup_{r\in[0,1]} |f_{R_j^{[1]}}(r)-1|\leq C\alpha^{j},
\label{resultslmhemfirst}
\end{equation}
where 
\begin{equation}
C=\alpha^{-2}\sum_{n=1}^{\infty}|\hat{g}(n)|^{2j^*}=\sum_{n=1}^{\infty}\frac{\lambda^2p^2+4\pi^2}{\lambda^2p^2+4\pi^2n^2}<\infty.
\end{equation}
Note that the convergence of $C$ is due to the fact that the term $n^2$ exists in the denominator. Lastly, given that $X_j^{[p]}=pR_j^{[1]}$, we can conclude that
\begin{equation}
f_{X^{[p]}_j}(x)=\frac{1}{p}f_{R_j^{[1]}}(\frac{x}{p}).
    \label{scaling}
\end{equation}
With this in mind, and using the results of eq. (\ref{resultslmhemfirst}), we can conclude our proof.
\section{Proof of Lemma \ref{addingsmoothesout}}
\label{proofsum}
Due to the independence between $A$ and $B$, we can derive the probability density function $f_{Z}(z)$ for $z\in[0,p]$ as follows 
\begin{align}
f_{Z}(z)&=(f_{A}\circledast_{p} f_{B})(z)=\int_{0}^{z} f_{A}(x)f_{B}(z-x)dx\nonumber\\&+\int_{z}^{p} f_{A}(x)f_{B}(z+p-x)dx,
\end{align}
where $\circledast_{p}$ indicates that the convolution is done modulo $p$. Next, we investigate the gap between the density function $f_{Z}(\cdot)$ and the uniform distribution on $[0,p]$. Precisely,
\begin{align}
|f_{Z}(z)-\frac{1}{p}|&\overset{(a)}=|\int_{0}^{z} (f_{A}(x)-\frac{1}{p})f_{B}(z-x)dx\nonumber\\&+\int_{z}^{p} (f_{A}(x)-\frac{1}{p})f_{B}(z+p-x)dx|\nonumber\\&\overset{(b)}{\leq}\int_{0}^{z}|f_{A}(x)-\frac{1}{p}|f_{B}(z-x)dx
\nonumber\\&+\int_{z}^{p} |f_{A}(x)-\frac{1}{p}|f_{B}(z+p-x)dx\nonumber\\&\overset{(c)}\leq\sup_{x\in[0,p]}|f_{A}(x)-\frac{1}{p}|(\int_{0}^{z}f_{B}(z-x)dx\nonumber\\&+\int_{z}^{p}f_{B}(z+p-x)dx)                                                                                                                                                                                                                                                                                                                                                                                                                                                                                                                                                                                                                                                                                                                                                                                                                                                                                                                                                                                                                                                                                                                                                                                                                                                                                                                                                                                                                                                                                                                                                                                                                                                                                                                                                                                                                                                                                                                                                                                                                                                                                                                                                                                                                                                                                                                                                                                                                                                                                                                                                                                                              \nonumber\\&\overset{(d)}=\sup_{x\in[0,p]}|f_{A}(x)-\frac{1}{p}|,
\end{align}
where $(a)$ is the result of writing $\frac{1}{p}$ as an integral over the density function $f_{B}(\cdot)$, $(b)$ is deduced from the triangular inequality, $(c)$ can be concluded from the definition of the supremum, and $(d)$ comes from the fact that $f_{B}(\cdot)$ is a density function. By taking the supremum on the left hand side, we can conclude the lemma. 

\section{Proof of Proposition \ref{theoremfinal}}
\label{proofcesaro}
To obtain our results, we leverage the Stolz–Cesàro theorem. To that end, let us define the sequences $a_n=\sum_{j=1}^{n}\mathbb{E}[I_j]$ and $b_n=n$. We have
\begin{equation}
\lim_{n\to+\infty}\frac{a_{n+1}-a_{n}}{b_{n+1}-b_{n}}=\lim_{n\to+\infty}\frac{\mathbb{E}[I_{n+1}]}{1}.
\end{equation}
Let us now define $\overline{I}$ as follows
\begin{align}
\overline{I}=\mathbb{E}_{Y,V}[\int_{V-Y}^{V}U(t)dt],
\end{align}
where $V$ is a RV uniformly distributed on $[0,p]$. In essence, $\overline{I}$ is the expectation of $I_j$ if $D_j^{[p]}$ is uniformly distributed. Next, we develop the expression above as follows
\begin{equation}
\overline{I}=\int_{0}^{\infty}\int_{0}^{p}\int_{v-y}^{v}U(t)\frac{1}{p}f_{Y}(y)dtdvdy.
\end{equation}
where $f_{Y}(y)$ is the probability density function of the RVs $Y_j$ for $j\in\mathbb{N}^*$. To further simplify the expression above, we proceed with a change of variable. Precisely, we let $s=t-v$. Consequently, we end up with
\begin{align}
\overline{I}&=\int_{0}^{\infty}\int_{0}^{p}\int_{-y}^{0}U(s+v)\frac{1}{p}f_{Y}(y)dsdvdy\nonumber\\&\overset{(a)}{=}\int_{0}^{\infty}\int_{-y}^{0}\int_{0}^{p}U(s+v)\frac{1}{p}f_{Y}(y)dvdsdy\nonumber\\&\overset{(b)}{=}\int_{0}^{\infty}\int_{-y}^{0}\int_{0}^{p}U(v)\frac{1}{p}f_{Y}(y)dvdsdy\nonumber\\&=\mathbb{E}[Y]\frac{1}{p}\int_{0}^{p}U(v)dv=\mathbb{E}[Y]\overline{U},
\label{therewritingofibar}
\end{align}
where $(a)$ is the result of changing the order of integration, and $(b)$ is due to the periodicity of $U(t)$. Now, let us write $I_{n+1}$ in a more convenient form as follows
\begin{equation}                                                                                                                                                                                                                                                                                                                                                                                                                                                                                                                                                                                                                                                                                                                                        
I_{n+1}=\int_{D_{n+1}^{[p]}-Y_j}^{D^{[p]}_{n+1}}U(t)dt=\int_{[D_{n}+X_{n+1}]\text{ mod } p}^{[D_{n}+X_{n+1}]\text{ mod } p + Y_{n+1}}U(t)dt.
\label{eqwithouD}
\end{equation}
Then, we can proceed as done in eq. (\ref{therewritingofibar}) to obtain
\begin{equation}                                                                                                                                                                                                                                                                                                                                                                                                                                                                                                                                                                                                                                                                                                                                        
\mathbb{E}[I_{n+1}]=\int_{0}^{\infty}\int_{-y}^{0}\int_{0}^{p}U(s+y+v)f_{V}(v)f_{Y}(y)dvdsdy,
\end{equation}
where $f_{V}(v)$ is the probability density function of $[D_{n}+X_{n+1}]\text{ mod } p$. By leveraging Lemma \ref{addingsmoothesout} and Proposition \ref{theoremadduniform}, we can conclude that
\begin{equation}
\sup_{v\in[0,p]} |f_{V}(v)-\frac{1}{p}|\leq \frac{C\alpha^{n}}{p}.
\end{equation}
Consequently, we have $f_{V}(v)\leq \dfrac{C\alpha^{n}}{p} +\dfrac{1}{p}$. Knowing this, we can now investigate the difference between $\overline{I}$ and $\lim_{n\to+\infty}\mathbb{E}[I_{n+1}]$. Given all what was presented above, we obtain
\begin{equation}
|\mathbb{E}[I_{n+1}]-\overline{I}|\leq\frac{C\alpha^{n}}{p}\mathbb{E}[Y]\overline{U}\xrightarrow{n \to +\infty}0,
\label{convergenceibar}
\end{equation}
given that $\mathbb{E}[Y],\overline{U}$ are finite and $\alpha<1$. Finally, given that $b_n$ is a strictly monotone and divergent sequence, we can conclude from the Stolz–Cesàro theorem that $S_n$ converges to $\overline{I}$ for large $n$.

\section{Proof of Proposition \ref{jointresults}}
\label{appendixjoint}
To start our proof, we first note that by the definition of $D^{[p]}_{j}$ reported in eq. (\ref{rewritedj}) and the independence of $X_i$ and $Y_i$, we have
\begin{equation}
f_{D^{[p]}_{j},D^{[p]}_{j+k}}(z_j,z_{j+k})=f_{D^{[p]}_{j}}(z_j)f_{D^{[p]}_{k}}([z_{j+k}-z_j]\text{ mod }p).
\end{equation}
Next, thanks to Lemma \ref{addingsmoothesout}, and by leveraging the triangular inequality, we can conclude that
\begin{align}
\sup_{z\in[0,p]} |f_{D^{[p]}_j}(z)-\frac{1}{p}|&\leq \sup_{x\in[0,p]} |f_{X^{[p]}_1}(x)-\frac{1}{p}|\nonumber\\&\leq \sup_{x\in[0,p]}f_{X^{[p]}_1}(x)+\frac{1}{p}.
\end{align}
We recall that $X_1$ is an exponential random variable of rate $\lambda$. Hence, the supremum of its probability density function is equal to $\lambda$. Given that the probability density function of $X_1$ is bounded, we can conclude that there exists $M$ such that $f_{X^{[p]}_1}(x)\leq M$. With this in mind, and by leveraging the triangular inequality, we can deduce that
\begin{align}
&|f_{D^{[p]}_{j}}(\cdot)f_{D^{[p]}_{k}}(\cdot)-\frac{1}{p^2}|=| (f_{D^{[p]}_{j}}(\cdot)-\frac{1}{p})f_{D^{[p]}_{k}}(\cdot)+\nonumber\\&\frac{1}{p}(f_{D^{[p]}_{k}}(\cdot)-\frac{1}{p})|\leq  (M+\frac{1}{p})\frac{C\alpha^{j}}{p}+\frac{1}{p}\frac{C\alpha^{k}}{p}.
\end{align}
Finally, by letting $C'=M+\frac{1}{p}$, we can conclude the proposition.

\section{Proof of Proposition \ref{covarianceproposition}}
\label{appendixcovarianceproposition}
By definition, we have
\begin{equation}
\mathrm{Cov}[I_j,I_{j+k}]=\mathbb{E}[I_jI_{j+k}]-\mathbb{E}[I_j]\mathbb{E}[I_{j+k}].
\end{equation}
Let us first investigate the term $\mathbb{E}[I_jI_{j+k}]$. As was done in eq. (\ref{eqwithouD}), we can rewrite $\mathbb{E}[I_jI_{j+k}]$ as follows
\begin{align}
\mathbb{E}[I_jI_{j+k}]=&\mathbb{E}_{Y_j,Y_{j+k}}[\int_{0}^{p}\int_{0}^{p}\int_{z_j}^{z_j+Y_j}U(t)dt\times\int_{z_{j+k}}^{z_{j+k}+Y_{j+k}}\nonumber\\&U(t')dt' f_{Z_j^{[p]},Z_{j+k}^{[p]}}(z_j,z_{j+k})dz_jdz_{j+k}],
\end{align}
where $Z_j^{[p]}$ and $Z_{j+k}^{[p]}$ denote $[D_{j}+X_{j+1}]\text{ mod } p$ and $[D_{j+k}+X_{j+k+1}]\text{ mod } p$ respectively. By leveraging Lemma \ref{addingsmoothesout} and Proposition \ref{jointresults}, we obtain
\begin{align}
\mathbb{E}[I_jI_{j+k}]\leq&\mathbb{E}_{Y_j,Y_{j+k}}[\int_{0}^{p}\int_{0}^{p}\int_{z_j}^{z_j+Y_j}U(t)dt\times\int_{z_{j+k} }^{z_{j+k}+Y_{j+k}}\nonumber\\&U(t')dt'(\frac{1}{p^2}+C'C(\frac{\alpha^{j}}{p}+\frac{\alpha^{k}}{p})dz_jdz_{j+k}].
\end{align}
Given that the RVs $Y_j$ and $Y_{j+k}$ are i.i.d., we obtain
\begin{equation}
\mathbb{E}[I_jI_{j+k}]\leq \overline{I}^2(1+pC'C(\alpha^{j}+\alpha^{k})).
\end{equation}
Next, we need to investigate the second term $\mathbb{E}[I_j]\mathbb{E}[I_{j+k}]$. To do so, we recall from eq. (\ref{convergenceibar}) that
\begin{equation}
|\mathbb{E}[I_{j}]-\overline{I}|\leq\frac{C\alpha^{j-1}}{p}\overline{I}.
\end{equation}
With this in mind, we can obtain the results of the proposition, which concludes our proof.
\section{Proof of Theorem \ref{theoremfinal}}
\label{appendixtheoremfinal}
From the expression in eq. (\ref{expressiontaba3lasess}), we can rewrite $ \overline{\mathcal{L}}$ as
\begin{equation}
\overline{\mathcal{L}}=\lim_{n\to+\infty} \:\frac{\frac{1}{n}\sum_{j=1}^{n}I_j}{\frac{1}{n}\sum_{j=1}^{n}(X_j+Y_j)}.
\end{equation}
Let us start by investigating the numerator $\mathcal{N}_n=\frac{1}{n}\sum_{j=1}^{n}I_j$. To that end, let us consider the event
\begin{align}
\Pr(|\mathcal{N}_n-\overline{I}|\geq\epsilon)&\overset{(a)}\leq \Pr(|\mathcal{N}_n-S_n|\geq \frac{\epsilon}{2}\cup|S_n-\overline{I}|\geq\frac{\epsilon}{2})\nonumber\\ &\overset{(b)}\leq\Pr(|\mathcal{N}_n-S_n|\geq \frac{\epsilon}{2})+\Pr(|S_n-\overline{I}|\geq\frac{\epsilon}{2}),
\end{align}
where $(a)$ and $(b)$ are the results of the triangular inequality and the union bound, respectively. Note that the term $|S_n-\overline{I}|$ is deterministic, and from Proposition \ref{convergenceofthesum}, we know that this term tends to zero when $n$ is large. Therefore, what remains is to examine the first term $\Pr(|\mathcal{N}_n-S_n|\geq \frac{\epsilon}{2})$. To do so, we leverage Chebyshev's inequality to obtain the following upperbound
\begin{equation}
\Pr(|\mathcal{N}_n-S_n|\geq \frac{\epsilon}{2})\leq \frac{4\mathrm{Var}[\mathcal{N}_n]}{\epsilon^2}.
\end{equation}
However, we know that 
\begin{equation}
\mathrm{Var}[\mathcal{N}_n]=\frac{\displaystyle\sum_{j=1}^{n}\mathrm{Var}[I_j]+2\sum_{j=1}^{n-1}\sum_{j<k\leq n}\mathrm{Cov}[I_j,I_{k}]}{n^2}.
\end{equation}
To upperbound $\mathrm{Var}[\mathcal{N}_n]$, we first need to find an upperbound of $\mathrm{Var}[I_j]$. To do so, we consider in the following the term $\mathbb{E}[I_j^2]$. Precisely, we have
\begin{equation}
\mathbb{E}[I_j^2]=\mathbb{E}_{Y_j,D_j ^{[p]}}[(\int_{D_j ^{[p]}-Y_j}^{D^{[p]}_j}U(t)dt)^2]
\end{equation} 
Given that $U(t)\leq K$ for $t\geq0$, we obtain
\begin{equation}
\mathbb{E}[I_j^2]\leq K^2\mathbb{E}_{Y}[Y^2].
\end{equation}
With this in mind, and by using the results of eq. (\ref{convergenceibar}), we can conclude that
\begin{equation}
\mathrm{Var}[I_j]\leq K^2\mathbb{E}_{Y}[Y^2]-\overline{I}^2(1-\frac{C\alpha^{j-1}}{p})^2
\end{equation}
Now, using Proposition \ref{covarianceproposition} and with some algebraic manipulations, we can upperbound the variance as follows
\begin{align}
&\mathrm{Var}[\mathcal{N}_n]\leq \frac{K^2\mathbb{E}_{Y}[Y^2]-\overline{I}^2(1-\frac{C}{p})^2}{n}+\frac{2\overline{I}^2C}{n^2}(\nonumber\\&\frac{nC'p}{1-\alpha}+\frac{\alpha^2C'p}{(1-\alpha)^2}+\frac{n}{p(1-\alpha)}+\frac{\alpha^2}{p(1-\alpha)(1-\alpha^2)}).
\end{align}
Consequently, we can conclude that $\lim_{n\to+\infty} \Pr(|\mathcal{N}_n-S_n|\geq\frac{\epsilon}{2})=0$, and hence $\lim_{n\to+\infty} \Pr(|\mathcal{N}_n-\overline{I}|\geq\epsilon)=0$. As for the denominator $\mathcal{D}_n$, we know from the weak law of large numbers that $\frac{1}{n}\sum_{j=1}^{n}(X_j+Y_j)$ converges in probability to $\frac{1}{\lambda}+\mathbb{E}[Y]$. With this in mind, and given that $\Pr(\mathcal{D}_n=0)=0$, we can leverage the Mann–Wald theorem to conclude our theorem.
\section{Proof of Theorem \ref{theoremfouriergenerall}}
\label{prooffourriergeneral}
Our proof follows similar steps to those provided in Theorem \ref{theoremuniform}'s proof reported in Appendix \ref{prooffourrier}. Particularly, we first note that $X^{[p]}_j$ can be written as 
\begin{equation}
X^{[p]}_j=pR_j^{[1]},
\end{equation}
where $R_j^{[1]}$ is the remainder of Euclidean division of $R_j=\sum_{k=1}^{j}\frac{X_k}{p}$ by $1$. Now, similarly, we seek to prove the existence of an order $j^*$ such that
\begin{equation}
\sup_{r\in[0,1]}f_{R_{j^*}^{[1]}}(r)<\infty,
\label{conditionbunded}
\end{equation}
where $f_{R_{j^*}^{[1]}}(r)$ is the probability density function of $R_{j^*}^{[1]}$. Next, given that $R_{1}=\frac{X_1}{p}$, we know that $f_{R_{1}}(r)=pf_{X_1}(pr)$. Accordingly, we have
\begin{equation}
f_{R_{1}}(r)\leq pM.
\end{equation}
Given that $f_{R_{1}}(r)$ is bounded, the same holds then for $f_{R_{1}^{[1]}}(r)$. With this in mind, we can leverage \cite[Theorem~1]{schatte1984asymptotic} to conclude the existence of two constants $\alpha<1$ and $C>0$ such that
\begin{equation}
\sup_{r\in[0,1]} |f_{R_j^{[1]}}(r)-1|\leq C\alpha^{j},
\label{resultslmhem}
\end{equation}
In particular, these constants are equal to
\begin{equation}
\alpha=\sup_{n\in\mathbb{N}^*}|\hat{g}(n)|,
\end{equation}
\begin{equation}
C=\frac{\sum_{n=1}^{\infty}|\hat{g}(n)|^{2}}{\alpha^2},
\end{equation}
where
\begin{equation}
\hat{g}(n)=\int_{0}^{\infty}f_{R_1}(r)e^{-2\pi inr}dr.
\end{equation}
Lastly, by using the Fourier scaling property and the fact that $f_{X^{[p]}_j}(x)=\frac{1}{p}f_{R_j^{[1]}}(\frac{x}{p})$, we can conclude our proof.
\section{Proof of Theorem \ref{theoremfinalnewprocess}}
\label{appendixtheoremfinalnewprocess}
Let us start our proof by recalling the definition of the average utility loss in any interval $[0,T]$. Particularly, we have
\begin{equation}
\overline{\mathcal{L}}_T=\frac{1}{T}\int_{0}^{T}U(t)W(t)dt.
\label{expectedlossfinitetimemiddleee}
\end{equation}
Then, given eq. (\ref{rewritingfunctionBt}), we can rewrite $\overline{\mathcal{L}}_T$ as follows
\begin{equation}
\overline{\mathcal{L}}_T=\underbrace{\frac{1}{T}\int_{0}^{T}U'(t)W(t)dt}_{\overline{\mathcal{L}}'_T}+\underbrace{\frac{1}{T}\int_{0}^{T}B(t)W(t)dt}_{\overline{\mathcal{L}}''_T}.
\end{equation}
With the above in mind, we recall that our goal is to show that for every $\epsilon>0$, we have
\begin{equation}
\Pr(|\overline{\mathcal{L}}_T-\overline{\mathcal{L}}_{\infty}|\geq\epsilon)\leq g(T,\epsilon),
\end{equation}
where $g(T,\epsilon)\xrightarrow{T\rightarrow \infty}0$ for any positive $\epsilon$. To prove this, we leverage the triangular inequality and the union bound to conclude that
\begin{equation}
\Pr(|\overline{\mathcal{L}}_T-\overline{\mathcal{L}}_{\infty}|\geq\epsilon)\leq \Pr(|\overline{\mathcal{L}'}_T-\overline{\mathcal{L}}_{\infty}|\geq\frac{\epsilon}{2})+\Pr(|\overline{\mathcal{L}}''_T|\geq\frac{\epsilon}{2}).
\label{dividingintotwo}
\end{equation}
Given that $U'(t)$ is a periodic function, we can conclude from Theorem \ref{theoremfinal} that there exists a function $g_1(T,\epsilon)$ such that
\begin{equation}
\Pr(|\overline{\mathcal{L}'}_T-\overline{\mathcal{L}}_{\infty}|\geq\frac{\epsilon}{2})\leq g_1(T,\epsilon),
\end{equation}
and $g_1(T,\epsilon)\xrightarrow{T\rightarrow \infty}0$ for any positive $\epsilon$. What remains is to show similar results for the second term of eq. (\ref{dividingintotwo}). To do so, we first note that
\begin{equation}
    \mathbb{E}[\overline{\mathcal{L}}''_T]=\frac{1}{T}\int_{0}^{T}\mathbb{E}[B(t)W(t)]\overset{(a)}=\frac{1}{T}\int_{0}^{T}\mathbb{E}[B(t)]\mathbb{E}[W(t)]=0,
\end{equation}
where $(a)$ is due to the independence between $B(t)$ and $W(t)$. Next, we investigate the variance of the term $\overline{\mathcal{L}}''_T$. To that end, given that $\mathbb{E}[\overline{\mathcal{L}}''_T]=0$, we have $\mathrm{Var}[\overline{\mathcal{L}}''_T]=\mathbb{E}[(\overline{\mathcal{L}}''_T)^2]$. Consequently, we derive the second order moment of $\overline{\mathcal{L}}''_T$ below:
\begin{align}
    \mathbb{E}[(\overline{\mathcal{L}}''_T)^2]&=\frac{1}{T^2}\mathbb{E}[(\int_{0}^{T}B(t)W(t)dt)(\int_{0}^{T}B(s)W(s)ds)]\nonumber\\&
   \overset{(a)} =\frac{1}{T^2}\mathbb{E}[\int_{0}^{T}\int_{0}^{T}B(t)W(t)B(s)W(s)dtds]\nonumber\\&
   \overset{(b)} =\frac{1}{T^2}\int_{0}^{T}\int_{0}^{T}\mathbb{E}[B(t)B(s)]\mathbb{E}[W(t)W(s)]dtds
   \nonumber\\&
   \overset{(c)}\leq\frac{1}{T^2}\int_{0}^{T}\int_{0}^{T}\mathbb{E}[B(t)B(s)]dtds]\nonumber\\&
   \overset{(d)} =\frac{1}{T^2}\int_{0}^{T}\int_{0}^{T}\mathrm{Cov}[B(t),B(s)]dtds
\end{align}
where $(a)$ is the result of iterated integrals, $(b)$ follows from the independence between $B(t)$ and $W(t)$, $(c)$ springs from the fact that $0\leq W(\cdot)\leq1$, and $(d)$ is due to $B(\cdot)$ being of zero mean. Now, using the wide sense stationary property along with a change of integration variable, we can rewrite $\mathbb{E}[(\overline{\mathcal{L}}''_T)^2]$ as follows
\begin{equation}
    \mathbb{E}[(\overline{\mathcal{L}}''_T)^2]=\frac{1}{T^2}\int_{0}^{T}\int_{-s}^{T-s}\rho(\tau)dsd\tau.
\end{equation}
An illustration of the integration domain is reported in Fig. \ref{integrationregion}. Given the symmetry of the domain $\mathcal{D}$ with respect to the horizontal axis, along with the symmetry of $\rho(\cdot)$, we can manipulate the integration variables to obtain the following:
\begin{equation}
    \mathbb{E}[(\overline{\mathcal{L}}''_T)^2]=\frac{2}{T^2}\int_{0}^{T}(T-\tau)\rho(\tau)d\tau. 
\end{equation}
Then, by introducing the absolute value, we get
\begin{equation}
    \mathbb{E}[(\overline{\mathcal{L}}''_T)^2]\leq\frac{2}{T^2}\int_{0}^{T}|T-\tau||\rho(\tau)|d\tau. 
\end{equation}
Now, given the fact that $0\leq\tau\leq T$, we can deduce that
\begin{equation}
    \mathbb{E}[(\overline{\mathcal{L}}''_T)^2]\leq \frac{2}{T}\int_{0}^{T}|\rho(\tau)|d\tau.
\end{equation}
With the above in mind, and given Assumption \ref{assumptionaboutbt} and Chebyshev's inequality, we can conclude that there exists a function $ g_2(T,\epsilon)$ such that
\begin{equation}
\Pr(|\overline{\mathcal{L}''}_T|\geq\frac{\epsilon}{2})\leq g_2(T,\epsilon),
\end{equation}
and $g_2(T,\epsilon)\xrightarrow{T\rightarrow \infty}0$ for any positive $\epsilon$. This concludes our proof. 
\begin{figure}[!ht]
\centering
\includegraphics[width=.99\linewidth]{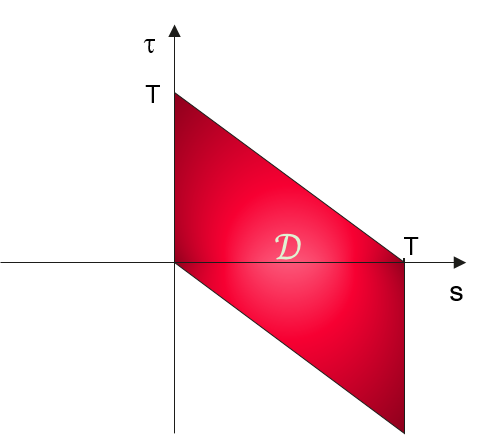}
\vspace{-20pt}
\caption{Illustration of the integration region related to $\mathbb{E}[(\overline{\mathcal{L}}''_T)^2]$.}
\label{integrationregion}
\end{figure}
\end{document}